\documentclass[a4paper]{article}

\title{Superlogarithmic Gap Result for LCLs on Trees in Quantum-LOCAL}

\usepackage{amsmath,amsthm,thmtools,amsfonts,amssymb}
\usepackage{mathrsfs}
\usepackage{mathtools}
\usepackage{physics}
\usepackage{braket}
\usepackage{array}
\allowdisplaybreaks

\usepackage{xspace}
\usepackage{xcolor}
\usepackage{subcaption}
\usepackage{enumitem}
\usepackage{microtype}
\usepackage{csquotes}
\usepackage{dsfont}
\usepackage{geometry}
\usepackage{babel}

\usepackage{algorithm}
\usepackage{algpseudocode}

\usepackage{cite}
\usepackage[hidelinks]{hyperref}
\usepackage{cleveref}
\crefname{lemma}{Lemma}{Lemmas}
\Crefname{lemma}{Lemma}{Lemmas}
\crefname{step}{step}{steps}
\Crefname{step}{Step}{Steps}

\declaretheorem[numberwithin=section]{theorem}
\declaretheorem{corollary, lemma, proposition}[sibling=theorem]
\declaretheorem[style=definition]{definition, example, remark}[sibling=theorem]

\usepackage[skins,many]{tcolorbox}

\newcommand{\calA}{\mathcal{A}}

\newcommand{\calC}{\mathcal{C}}

\newcommand{\calH}{\mathcal{H}}

\newcommand{\calN}{\mathcal{N}}

\newcommand{\calV}{\mathcal{V}}

\newcommand{\bbE}{\mathbb{E}}

\newcommand{\bbR}{\mathbb{R}}

\newcommand{\reals}{\bbR}

\DeclareMathOperator{\poly}{poly}
\DeclareMathOperator{\diam}{diam}

\DeclareMathOperator{\dist}{dist}
\DeclareMathOperator{\neighborhood}{\calN}
\DeclareMathOperator{\view}{\calV}
\DeclareMathOperator{\halfEdges}{\calH}
\DeclareMathOperator{\ecc}{ecc}
\DeclarePairedDelimiterXPP{\floor}[1]{}{\lfloor}{\rfloor}{}{#1}

\DeclarePairedDelimiterXPP{\expect}[1]{\bbE}{[}{]}{}{#1}
\DeclarePairedDelimiterXPP{\pr}[1]{\Pr}{[}{]}{}{#1}
\DeclarePairedDelimiterXPP{\variance}[1]{\mathrm{Var}}{[}{]}{}{#1}
\DeclarePairedDelimiterXPP{\covariance}[1]{\mathrm{Cov}}{[}{]}{}{#1}

\newcommand{\local}{LOCAL\xspace}

\newcommand{\outcome}{\textrm{Out}}
\newcommand{\inpt}{\textrm{inpt}}
\newcommand{\VR}{V^{\mathsf{R}}}
\newcommand{\VC}{V^{\mathsf{C}}}
\newcommand{\GR}{G^{\mathsf{R}}}
\newcommand{\GC}{G^{\mathsf{C}}}

\begin{document}

\newcommand{\myaff}[1]{\,$\cdot$\, {\small #1}\par\medskip}
\newenvironment{myabstract}%
{\list{}{\listparindent 1.5em
        \itemindent    \listparindent
        \leftmargin    0cm
        \rightmargin   0cm
        \parsep        0pt}%
    \item\relax}%
{\endlist}

\newenvironment{mycover}%
{\list{}{\listparindent 0pt
        \itemindent    \listparindent
        \leftmargin    0cm
        \rightmargin   1.5cm
        \parsep        0pt}%
    \raggedright
    \item\relax}%
{\endlist}

\begin{mycover}
{\huge\bfseries\boldmath Superlogarithmic Gap Result for LCLs on Trees in Quantum-LOCAL \par}
\bigskip
\bigskip

\textbf{Francesco d'Amore}
\myaff{Gran Sasso Science Institute, Italy}

\textbf{Henrik Lievonen}
\myaff{Aalto University, Espoo, Finland \,$\cdot$\, CISPA Helmholtz Center for Information Security, Saarbr\"ucken, Germany}

\bigskip
\end{mycover}

\begin{myabstract}
    \noindent\textbf{Abstract.}
    We show that, on trees, any locally checkable labeling problem (LCL) \(\Pi\) that can be solved by an \(n^{o(1)}\)-dependent distribution can also be solved by an \(O(\log n)\)-round deterministic LOCAL algorithm.
	The result is obtained through a rake-and-compress-style decomposition of the input tree, and local simulations of the bounded dependent distribution on the components of the decomposition.
	As a corollary to our result, any LCL problem on trees can either be solved by an \(O(\log n)\) deterministic LOCAL algorithm, or requires \(n^{\Omega(1)}\) rounds to solve by a quantum-LOCAL algorithm.
\end{myabstract}

\begin{myabstract}
    \noindent\textbf{Acknowledgements.}
    Francesco d’Amore was supported by Decreto MUR n. 47/2025, CUP D13C25000750001.
    This work was supported in part by the Research Council of Finland, Grants 363558. Most of this work was done while Henrik Lievonen was affiliated with Aalto University.
\end{myabstract}

% \thispagestyle{empty}
% \setcounter{page}{0}
% \clearpage

\section{Introduction}
The distributed computing community has recently made significant progress in understanding the role of quantum computation and communication in the context of distributed graph problems \cite{le-gall-2022-quantum-distributed-computing,le-gall-magniez-2018-sublinear-time-quantum-computation,izumi-le-gall-2019-quantum-distributed-algorithm-for,censor-hillel-fischer-etal-2022-quantum-distributed}, even with respect to locally checkable labeling (LCL) problems in the LOCAL model \cite{coiteux-roy-d-amore-etal-2024-no-distributed-quantum,akbari-coiteux-roy-etal-2025-online-locality-meets,balliu-brandt-etal-2025-distributed-quantum-advantage,d-amore-2025-on-the-limits-of-distributed-quantum}.
LCL problems are problems whose solutions are \emph{easy to verify}, but might be \emph{hard to find} by a distributed algorithm.
More precisely, an LCL problem can be specified by a finite list \(\calC\) of \emph{valid labeled neighborhoods} of constant radius and of bounded degree, and a node can verify its local solution by simply inspecting is constant-radius neighborhood and checking whether it belongs to \(\calC\) \cite{naor-stockmeyer-1995-what-can-be-computed-locally}.
Many classical distributed graph problems, such as vertex coloring, maximal independent set, maximal matching, sinkless orientation, etc., belong to this class:
in essence, the class of LCL problems is wide enough to include many interesting natural graph problems but restrictive enough so that many results apply to the class as a whole --
in fact, the complexity of LCL problems in the classical LOCAL model is nowadays well-understood \cite{naor-stockmeyer-1995-what-can-be-computed-locally,chang-pettie-2019-a-time-hierarchy-theorem-for-the,chang-kopelowitz-pettie-2019-an-exponential-separation,chang-2020-the-complexity-landscape-of-distributed,brandt-fischer-etal-2016-a-lower-bound-for-the,balliu-hirvonen-etal-2018-new-classes-of-distributed,brandt-2019-an-automatic-speedup-theorem-for,balliu-brandt-etal-2021-lower-bounds-for-maximal,balliu-brandt-etal-2021-almost-global-problems-in-the,suomela-2020-landscape-of-locality-invited-talk,dahal-d-amore-etal-2023-brief-announcement-distributed}, and we especially know when randomness helps (or does not help) in solving them\cite{chang-kopelowitz-pettie-2019-an-exponential-separation,chang-pettie-2019-a-time-hierarchy-theorem-for-the,balliu-brandt-etal-2020-how-much-does-randomness-help,balliu-ghaffari-etal-2025-shared-randomness-helps-with}.
While general quantum advantage in the LOCAL model was already established in 2019 \cite{le-gall-nishimura-rosmanis-2019-quantum-advantage-for}, the complexity landscape of LCL problems in the quantum-LOCAL model \cite{gavoille-kosowski-markiewicz-2009-what-can-be-observed} is much less understood, with mostly negative results showing that quantum computation and communication cannot bring (much) advantage \cite{gavoille-kosowski-markiewicz-2009-what-can-be-observed,arfaoui-fraigniaud-2014-what-can-be-computed-without,coiteux-roy-d-amore-etal-2024-no-distributed-quantum,akbari-coiteux-roy-etal-2025-online-locality-meets,d-amore-2025-on-the-limits-of-distributed-quantum,dhar-kujawa-etal-2024-local-problems-in-trees-across-a,balliu-coupette-etal-2025-new-limits-on-distributed}.
Only few positive results have been recently provided \cite{balliu-brandt-etal-2025-distributed-quantum-advantage,balliu-casagrande-etal-2026-distributed-quantum}, exhibiting artificially crafted LCL problems that can be solved faster in quantum-LOCAL than in classical LOCAL, which leave open whether natural LCL problems admit quantum advantage.

On the negative side, it has been proved that for approximate graph coloring problems, quantum-\local algorithms cannot be substantially faster than deterministic \local algorithms \cite{coiteux-roy-d-amore-etal-2024-no-distributed-quantum}, and that linear programming problems do not admit any quantum advantage with respect to deterministic algorithms \cite{balliu-coupette-etal-2025-new-limits-on-distributed}.
More generally, there have been works showing general negative results for LCL problems on specific graph families:
It is known that on rooted trees, \(o(\log \log \log n)\)-round quantum algorithms for an LCL can be transformed into \(O(\log^\star n)\)-round deterministic \local algorithms \cite{akbari-coiteux-roy-etal-2025-online-locality-meets}.
Moreover, Dhar et al.\ \cite{dhar-kujawa-etal-2024-local-problems-in-trees-across-a} proved the following statements:
\begin{enumerate}
	\item On rooted regular trees, there is no possible quantum advantage unless the classical deterministic complexity is \(O(\log^\star n)\).
	\item On rooted trees, a global problem (i.e., a problem with classical locality \(\Omega(n)\)) cannot be solved in quantum-LOCAL with locality \(o(n)\).
	\item On unrooted regular trees, if the deterministic complexity is \(\omega(\log n)\), then there cannot be any quantum advantage.
	\item On unrooted trees, a global problem cannot be solved in quantum-LOCAL with locality \(o(n)\).
\end{enumerate}
The situation in trees is quite unsatisfactory, especially because most of the classification results for LCL problems in the classical LOCAL model are specific to the family of trees \cite{chang-2020-the-complexity-landscape-of-distributed}.
In this work, we make a step towards a better comprehension of the quantum complexity landscape on trees.
We prove that, on unrooted trees, any LCL problem that can be solved by an \(n^{o(1)}\)-round quantum-LOCAL algorithm can also be solved by an \(O(\log n)\)-round deterministic LOCAL algorithm. 
The same result trivially extends to the case of rooted trees, which was left open by \cite{dhar-kujawa-etal-2024-local-problems-in-trees-across-a}.

We actually prove a stronger result: on unrooted trees, any LCL problem that can be solved by an \(n^{o(1)}\)-dependent distribution can also be solved by an \(O(\log n)\)-round deterministic LOCAL algorithm.
We recall that a \(T\)-dependent distribution is a probability distribution over the output labelings of the graph such that, for every two subsets of nodes \(A\) and \(B\) at distance at least \(T+1\), the random variables corresponding to the output labels of nodes in \(A\) and those in \(B\) are mutually independent.
Note that a \(T\)-round quantum-LOCAL algorithm always produces a \(2T\)-dependent distribution \cite{akbari-coiteux-roy-etal-2025-online-locality-meets}: hence, negative results for \(T\)-dependent distributions also apply to quantum-LOCAL algorithms with locality \(\floor{T/2}\).

\section{Preliminary definitions}

We present here informal definitions of the models of computation we use.
We refer the reader to Appendix~\ref{sec:preliminaries} for more formal definitions.

\subsection{The models}

We work in the LOCAL model of distributed computing: in this model, the computation proceeds in synchronous rounds, and in each round, each node can perform unlimited local computation and exchange messages of unbounded size with its neighbors.
Nodes receive as input the size \(n\) of the graph, and they have unique \(O(\log n)\)-bit identifiers (IDs) that are distinct from each other.
In the \emph{randomized} LOCAL model, as opposed to the \emph{deterministic} LOCAL model, nodes also have access to private source of randomness, and the algorithm is allowed to fail with probability at most \(1/n^c\) for any given \(c > 0\).
In the \emph{quantum}-LOCAL model, nodes can exchange an arbitrary number of qubits and perform arbitrary unitary transformations on their local quantum states: the output of a node is just the measurement outcome of its local quantum state, and the algorithm is allowed to fail with probability at most \(1/n^c\) for any given \(c > 0\).

Finally, we also consider the \emph{bounded-dependence} model \cite{akbari-coiteux-roy-etal-2025-online-locality-meets}, where we assign to each input graph a probability distribution over the output labelings, and we say that this assignment has locality \(T\) if the two following statements hold:
\begin{enumerate}
    \item \emph{Independence property}: For every input graph \(G\), and any two subsets of nodes \(A\) and \(B\) at distance at least \(2T+1\), the random variables corresponding to the output labels of nodes in \(A\) and those in \(B\) are mutually independent.
    In such case, we say that the distribution is \(2T\)-dependent.
    \item \emph{Non-signaling property}: For every two input graphs \(G\) and \(G'\) that are identical in the radius-\(T\) neighborhood of a subset of nodes \(A\), the marginal distributions of the output labels of \(A\) are the same in both distributions.
\end{enumerate}
Again, in this model we require that the distribution solves the problem with probability at least \(1 - 1/n^c\) for any given \(c > 0\).
It holds that a \(T\)-round LOCAL algorithm (being it deterministic, randomized, or quantum) always produces a bounded-dependent distribution with locality \(T\) \cite{akbari-coiteux-roy-etal-2025-online-locality-meets}.
Hence, a lower bound result for \(2T\)-dependent distributions also applies to \(T\)-round quantum-LOCAL algorithms.

Without loss of generality, we assume that the output labels in all of these model are given on half-edges, that is, pairs of the form \((v,e)\) where \(v\) is a node and \(e\) is an edge incident to \(v\).

\subsection{Locally checkable labeling problems}

In this paper, we consider the family of locally checkable labeling (LCL) problems, introduced in \cite{naor-stockmeyer-1995-what-can-be-computed-locally}.
We here focus on a simpler---yet equivalent---definition of LCL problems, which is specific to trees: the family of node-edge checkable problems \cite{balliu-censor-hillel-etal-2021-locally-checkable}.
In this definition, an LCL problem is specified by a finite set of node constraints \(\calC_N\) and a finite set of half-edge constraints \(\calC_E\), where a node configuration is a multiset of input and output labels that can appear on the half-edges incident to a node, and a half-edge configuration is a pair of input and output labels that can appear on the two half-edges corresponding to an edge.
Since \(\calC_N\) is finite, it follows that the maximum degree of the input graph, which we denote by \(\Delta\), is a constant that is independent of the number of nodes \(n\).

\section{Overview of the analysis}

Our main technical contribution is a procedure that, given a bounded-dependent distribution with locality \(T = n^{o(1)}\) for a node-edge checkable problem \(\Pi\), produces a randomized LOCAL algorithm with locality \(O(\poly(T \log n))\) that solves \(\Pi\).
By known gap results \cite{chang-pettie-2019-a-time-hierarchy-theorem-for-the}, since \(O(\poly(T \log n)) = n^{o(1)}\), it holds that there exists a deterministic LOCAL algorithm with locality \(O(\log n)\) that solves~\(\Pi\).

Our randomized algorithm has two stages: 
We first perform a rake-and-compress type decomposition of the input tree, which allows us to divide the input graph into small components that follow a hierarchical structure.
Then, we sample the output labels from the bounded-dependent distribution for each component of the decomposition, starting from the highest layer in the hierarchy and proceeding downwards. 
The sampling procedure is not trivial, as we need to ensure that the output labels sampled for each component are consistent with those sampled for the upper layers of the hierarchy (since the algorithm does not have global view), and that the overall sampling procedure works with high probability.
The issue here is that the bounded-dependent distribution has dependencies at distances at most \(2T\), and when two components meet in downward layers, we need to make sure that the two distributions are compatible with each other.

\subsection{Hierarchical decomposition of a tree}

The hierarchical decomposition that we use is obtained through a rake-and-compress style decomposition.
Rake-and-compress has been extensively used in the literature \cite{chang-2020-the-complexity-landscape-of-distributed,miller-reif-1985-parallel-tree-contraction-and-its,chang-he-etal-2018-the-complexity-of-distributed-edge,chang-pettie-2019-a-time-hierarchy-theorem-for-the,balliu-brandt-etal-2023-on-the-node-averaged-complexity}.
It works roughly as follows:
Given a forest $G = (V, E)$, the nodes of the forest are decomposed into a sequence of pairwise disjoint sets \
\[ 
    V = \VR_{1} \cup \VC_{1} \cup \VR_{2} \cup \VC_{2} \cup \VR_{3} \cup \VC_{3} \cup \ldots
\] 
where the subscript and superscript together indicate the \emph{level} of the decomposition, and the superscript indicates the \emph{type} of the level, which is either \emph{rake} (denoted by \(\mathsf{R}\)) or \emph{compress} (denoted by \(\mathsf{C}\)).
We denote the forest induced by nodes $\VR_i \cup \VC_i \cup \VR_{i+1} \cup \VC_{i+1} \cup \cdots$ by $\GR_i$, and the forest induced by nodes $\VC_i \cup \VR_{i+1} \cup \VC_{i+1} \cup \cdots$ by $\GC_i$.
Let $L$ be the smallest integer \(i\) such that $\GC_i$ is the empty graph.

Given two parameters $\ell \ge 1$ and $\gamma \ge 1$, the decomposition satisfies the following two key properties:
\begin{enumerate}
	\item\label{prop:tree-decomposition-rake} Each connected component of the subgraph induced by $\VR_i$ must be a rooted tree of height at most $\gamma-1$ such that the root has at most one neighbor in $\GC_i$, while other nodes have no neighbors in $\GC_i$.
	\item\label{prop:tree-decomposition-compress} Each connected component of the subgraph induced by $\VC_i$ must be a path on a number of nodes in the range $[\ell, 2\ell]$ such that the endpoints have at most one neighbor in $\GR_{i+1}$, while intermediate nodes have no neighbors in \(\GR_{i+1}\). If the path has one node, then this single node can have at most two neighbors in \(\GR_{i+1}\).
\end{enumerate}

A decomposition satisfying these properties is called a \emph{$(\gamma, \ell)$-decomposition}.
We order the layers as
\(\VR_1,\VC_1,\VR_2,\VC_2,\ldots\).  For later use, for each connected
component \(C\) of a layer, we define the \emph{upward-component of \(C\)},
denoted \(C^+\), as the union of all components reachable from \(C\) by a path
that visits components in strictly increasing layer order.  Equivalently,
starting from \(C\), we repeatedly add neighboring components that belong to
higher layers.
While many algorithms for this decomposition have been already given in previous works, we need an algorithm that takes care of the case in which both \(\gamma\) and \(\ell\) are non-constant parameters.
In the appendix, we prove the following theorem.
\begin{theorem}
    \label{thm:rcdecomp}
    Let \(\gamma,\ell\ge 1\) be integers.
    There exists a deterministic \local algorithm \(\calA\) that computes a
    \((\gamma,\ell)\)-decomposition of a forest of \(n\) nodes with
    \(L = O(\lceil \ell/\gamma\rceil\log n)\) levels in time
    \(O((\gamma+\ell^2+\ell\log^* n)\lceil \ell/\gamma\rceil\log n)\).
\end{theorem}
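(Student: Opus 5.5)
The algorithm runs in phases \(i=1,2,\dots\). Each phase peels off a rake layer \(\VR_i\) and then a compress layer \(\VC_i\) from the current forest \(\GR_i\) (with \(\GR_1=G\)). This follows the classical rake-and-compress scheme of Chang and Pettie, with both parameters left non-constant.

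\textbf{Rake step.} Inside phase \(i\) we repeatedly remove leaves and isolated nodes, \(\gamma\) times in total, restricting attention to the part of the forest that hangs off the remainder as trees of height less than \(\gamma\). Concretely, a node \(v\) of \(\GR_i\) joins \(\VR_i\) if, after deleting at most one incident edge, it lies in a rooted tree of height at most \(\gamma-1\) that is attached to the rest of \(\GR_i\) only through its root. Each node can check this with a \(\gamma\)-round exploration of \(\GR_i\). This gives Property~\ref{prop:tree-decomposition-rake} directly: each component of \(\VR_i\) is a tree of height at most \(\gamma-1\) whose root is the only node adjacent to \(\GC_i\).

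\textbf{Compress step.} In the remaining forest \(\GC_i\), consider the nodes of degree at most \(2\). They induce paths. Paths with fewer than \(\ell\) nodes cannot yet be compressed; they are kept and raked in later phases. Long paths (at least \(\ell\) nodes) are cut into subpaths with between \(\ell\) and \(2\ell\) nodes. To do this, compute a ruling set on each path by running Cole--Vishkin \(3\)-coloring in \(O(\log^* n)\) rounds, then greedily selecting separators at distance in \([\ell,2\ell]\). Simulating this on the power graph of the path with distance \(\ell\) costs \(O(\ell\log^* n)\) rounds, plus \(O(\ell^2)\) rounds of local bookkeeping. To meet the endpoint condition of Property~\ref{prop:tree-decomposition-compress}, the separator nodes (and the \(O(1)\) nodes near the ends of each long path) are not placed in \(\VC_i\); they are pushed to \(\GR_{i+1}\) instead. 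This guarantees that only the endpoints of each compressed path have neighbours in higher layers, at most one each. A phase therefore costs \(O(\gamma+\ell^2+\ell\log^* n)\) rounds.

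\textbf{Counting phases (the main obstacle).} The key step is to show that the number of nodes drops by a constant factor every \(O(\lceil \ell/\gamma\rceil)\) phases, which gives \(L=O(\lceil\ell/\gamma\rceil\log n)\). The plan is a standard charging argument on the forest \(\GR_i\):
\begin{itemize}
\item Contract the degree-\(\le 2\) paths into edges. The resulting forest has at least as many leaves as nodes of degree \(\ge 3\).
\item If a constant fraction of the nodes lie in long paths (length at least \(\ell\)), the compress step removes a constant fraction of them. Only \(O(1)\) nodes per path are left behind, and they are charged to the at least \(\ell\) nodes that are removed.
\item Otherwise, most nodes lie in short paths or in high-degree nodes, and these must be charged to leaves. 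A single rake step only peels height \(\gamma\). A short path of length below \(\ell\) hanging off a subtree is therefore removed only after \(\lceil \ell/\gamma\rceil\) phases of raking. During those phases the surrounding branching structure loses a constant fraction of its leaves, and the same fraction of its degree-\(\ge 3\) nodes, which it charges to those leaves.
\end{itemize}
Making this amortization rigorous requires bounding the number of surviving nodes after \(k=\Theta(\lceil\ell/\gamma\rceil)\) consecutive phases by a constant fraction of the number before. This is where I expect the main difficulty. My first attempt would be to compare with the constant-parameter analysis of Chang--Pettie, applied to the graph in which each maximal degree-\(2\) path is replaced by a weighted edge, and to argue that \(k\) phases simulate one classical rake-and-compress step on it. Multiplying the per-phase cost by \(L\) then yields the stated running time \(O((\gamma+\ell^2+\ell\log^* n)\lceil \ell/\gamma\rceil\log n)\).
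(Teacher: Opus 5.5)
\textbf{Gap: the level bound is not proved.} Your algorithm and per-phase cost essentially match the paper's: $\gamma$ leaf-removal operations, then cutting long degree-$2$ paths with a ruling set of the power path, at $O(\gamma+\ell^2+\ell\log^* n)$ rounds per phase. But the bound $L=O(\lceil \ell/\gamma\rceil\log n)$, which is the real content of the theorem, is left unproved, and the charging you sketch does not deliver it.

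You measure progress by the number of nodes and charge from the start of a block of $\Theta(\lceil\ell/\gamma\rceil)$ phases. The only removal your argument actually secures is that of the degree-$\le 1$ nodes of the contracted forest together with their pendant paths. Suppose pendant paths are empty while every degree-$2$ path between two branching nodes carries $\Theta(\ell)$ nodes. Then that removal is only a $1/\Theta(\ell)$ fraction of the forest, which would give $O(\ell\lceil\ell/\gamma\rceil\log n)$ phases. Your assertion that the branching structure loses a constant fraction of its leaves and degree-$\ge 3$ nodes says nothing about those internal degree-$2$ nodes, and that is where the mass sits.

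The missing idea is to use as potential the size of the \emph{skeleton} $S(F)$, in which each maximal degree-$2$ path is contracted to an edge. This is exactly your closing ``first attempt'', and it is made precise by three facts:
(i) Right after a compress step, every maximal degree-$2$ path has fewer than $\ell+2$ nodes. Long paths are cut and their boundary nodes are kept, so no outside node loses degree and no paths merge.
(ii) Therefore, within $h=\lceil(\ell+3)/\gamma\rceil$ phases, every skeleton node of degree $\le 1$ is raked together with its attached path. Each phase peels $\gamma$ nodes off that path, and the interleaved compress steps only delete nodes, so they can only speed this up.
(iii) In a forest without degree-$2$ nodes, at least half of the nodes have degree $\le 1$.
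Hence the skeleton halves every $h$ phases, and the forest is empty after $O(h\log n)=O(\lceil\ell/\gamma\rceil\log n)$ phases.

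\textbf{Smaller repairs to the algorithm.} First, your rake rule has no tie-break. On a path component with $2\gamma$ nodes, every node satisfies your hanging-tree criterion, so the whole path becomes one component of $\VR_i$. That component has height at least $\gamma$ for every choice of root, violating the bound $\gamma-1$. The paper avoids this by removing only the larger-identifier endpoint of a single-edge component.

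Second, since the ends of a path (and the separators) stay out of $\VC_i$, only paths with at least $\ell+2$ nodes can yield a segment of at least $\ell$ nodes. So ``long'' must mean at least $\ell+2$ nodes, consistent with fact (i).

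Third, Cole--Vishkin does not apply verbatim on $P^\ell$, which has degree $2\ell$. A degree-dependent $O(\Delta+\log^* n)$ MIS on $P^\ell$, with each of its rounds simulated in $O(\ell)$ rounds, gives the $O(\ell^2+\ell\log^* n)$ you claim.
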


A key property of any \((\gamma,\ell)\)-decomposition is the following lemma, whose proof we leave in the appendix.

\begin{restatable}{lemma}{lemmadist}
    \label{lem:dist}
    Let \(\gamma,\ell\ge 1\) be integers, let \(G\) be a forest, and let
    \[ 
        V = \VR_{1} \cup \VC_{1} \cup \VR_{2} \cup \VC_{2} \cup \VR_{3} \cup \VC_{3} \cup \ldots
    \] 
    be a \((\gamma, \ell)\)-decomposition of \(G\).
    Let \(C\) be a connected component in one of the layers \(\VR_j\) or \(\VC_j\).
    Then for every layer \(V_i\) (either \(\VR_i\) or \(\VC_i\)), it holds that any two components \(C_1\) and \(C_2\) of \(V_i \cap C^+\) are at distance at least \(\ell+1\) from each other.
\end{restatable}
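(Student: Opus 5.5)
The plan is to view \(C^+\) as a tree of components and to look at the unique tree path between \(C_1\) and \(C_2\). Contract every connected component of every layer to a single vertex. Since \(G\) is a forest and each component is connected, the result is again a forest. By definition, \(C^+\) is the set of components reachable from \(C\) by a path along which the layer index strictly increases. In a forest such a path is the unique path, so rooting the subtree \(C^+\) at \(C\) gives a rooted tree in which layers strictly increase along every root-to-leaf path. Since \(C^+\) is connected and \(G\) is a forest, the unique \(G\)-path between any node of \(C_1\) and any node of \(C_2\) stays inside \(C^+\). So it suffices to study that path.

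Take \(C_1 \neq C_2\) in the same layer \(V_i\). Neither is an ancestor of the other, because layers strictly increase along root-to-leaf paths and both have layer \(V_i\). Let \(D\) be their lowest common ancestor in the rooted component tree. Then \(D\) lies in a layer strictly below \(V_i\). Moreover, \(D\) has two distinct children \(D_1 \neq D_2\) of higher layer, lying on the paths towards \(C_1\) and \(C_2\) respectively; possibly \(D_1=C_1\) or \(D_2=C_2\). The \(G\)-path from \(C_1\) to \(C_2\) enters \(D\) from \(D_1\), traverses \(D\), and leaves into \(D_2\).

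Next, use the decomposition properties to show that \(D\) must be a long compress path. If \(D\) were a rake component in \(\VR_j\), then by Property~\ref{prop:tree-decomposition-rake} only its root may have a neighbor in \(\GC_j\), and at most one. But \(D_1\) and \(D_2\) both lie in higher layers, hence in \(\GC_j\), and they are distinct components. So \(D\) would have at least two such neighbors, a contradiction. Hence \(D\) is a compress path in some \(\VC_j\), and \(D_1, D_2 \subseteq \GR_{j+1}\). By Property~\ref{prop:tree-decomposition-compress}, internal nodes of \(D\) have no neighbors in \(\GR_{j+1}\), and each endpoint has at most one. So the neighbors of \(D_1\) and of \(D_2\) in \(D\) are the two distinct endpoints of \(D\), except in the single-node case.

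Now count the length of the path. In the generic case, the path from \(C_1\) to \(C_2\) contains an edge from \(D_1\) into one endpoint of \(D\). It then traverses the whole of \(D\), which has at least \(\ell\) nodes and hence at least \(\ell-1\) edges. Finally it takes an edge from the other endpoint into \(D_2\). This gives distance at least \((\ell-1)+2 = \ell+1\). In the single-node case, \(D\) has one node, so \(\ell = 1\). The path still uses two edges, one from \(D_1\) into \(D\) and one from \(D\) into \(D_2\), so the distance is at least \(2=\ell+1\).

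The step I expect to need the most care is the structural claim in the first paragraph: that \(C^+\) is a rooted tree with strictly increasing layers, so that the lowest common ancestor \(D\) exists and has two distinct higher-layer children adjacent to it. In particular, one must check that adjacency between two different components really yields a neighbor of \(D\) in \(\GC_j\) (respectively \(\GR_{j+1}\)), as the decomposition properties require.
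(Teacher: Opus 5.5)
Your proof is correct and is essentially the paper's argument. The paper inducts upward from \(C\): a rake component has at most one higher neighbor, while a compress component has at most two, attached at opposite ends of a path of at least \(\ell\) nodes. Your lowest common ancestor \(D\) is exactly the compress component at which that induction separates \(C_1\) from \(C_2\), and your version makes explicit the final step the paper only asserts informally, namely that the unique \(G\)-path between the two branches must traverse all of \(D\).
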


\section{Sampling procedure}
In this section, we present a \local algorithm that uses a bounded-dependence distribution to produce a solution for an LCL problem \(\Pi\).
Formally, we prove the following lemma which, combined with \cref{thm:rcdecomp}, gives the full algorithm:
\begin{lemma}
    \label{lem:sampling}
	Let $\Pi$ be an LCL problem on forests, and let $\outcome$ be a bounded-dependence distribution with locality $T(n)$ that works with high probability.
	Then, there exists a \local algorithm that, given a $(T(n), 2T(n))$-decomposition of the input graph $G$ with size $n$, solves problem $\Pi$ with high probability and locality $O(T(n) \log n)$.
\end{lemma}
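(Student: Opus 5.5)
We process the layers of the given $(T,2T)$-decomposition top-down, in the order $\VC_L, \VR_L, \VC_{L-1}, \dots, \VR_1$, and output on each component $C$ a labeling sampled from a conditional marginal of $\outcome$.

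\emph{Sampling rule.} When component $C$ is processed, all components of higher layers adjacent to $C$ (in particular those in $C^+$ near $C$) have already fixed their half-edge labels. Let $B(C)$ be the set of already-labeled half-edges within distance $2T$ of $C$. Using the non-signaling property, the marginal of $\outcome$ on $C \cup B(C)$ is determined by the radius-$T$ neighborhood of $C\cup B(C)$. We sample the labels of $C$ from this marginal conditioned on the values already fixed on $B(C)$. Labels beyond distance $2T$ are independent of $C$'s labels by the $2T$-dependence property, so conditioning on them would change nothing and can be dropped.

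\emph{Locality.} Each rake component has height less than $T$ and each compress path has at most $4T$ nodes, so every component has diameter $O(T)$. Processing one layer therefore needs $O(T)$ rounds to gather the relevant neighborhood and previously fixed labels. With $L=O(\lceil 2T/T\rceil\log n)=O(\log n)$ layers by \cref{thm:rcdecomp}, the total locality is $O(T\log n)$.

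\emph{Correctness.} The key claim, proved by induction down the layers, is that the joint law of all labels fixed so far equals the corresponding marginal of $\outcome$ on $G$. The base case is the top layer. Its components are pairwise at distance at least $2T+1$ by \cref{lem:dist} applied with $\ell=2T$ (after checking that distinct upward-components are also far apart or otherwise independent). By $2T$-dependence, sampling each from its marginal independently gives the correct joint law.

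The inductive step is the main obstacle. Two components $C_1,C_2$ of the same layer are sampled in parallel, each conditioned only on its own nearby fixed labels. For this to be a correct joint conditional sample, $C_1$ and $C_2$ must be conditionally independent given everything already fixed. Unconditional independence does not imply conditional independence: conditioning on a common ancestor region between them may correlate them.

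To handle this, we would use the tree structure. Inside a single $C^+$, \cref{lem:dist} puts same-layer components at distance at least $2T+1$. Any path between them passes through previously fixed higher-layer nodes, which form a separator. Components in different upward-components connect only through higher layers as well. We would then argue a Markov-type property: given the labels on a window of width about $2T$ along the separating higher-layer path, the two sides are independent. Here the compress paths of length at least $2T$ serve to give a buffer. If this Markov property does not follow from $2T$-dependence alone (in general it may not), the fallback is the following. Sample each upper component jointly with a $2T$-margin of its lower neighbors, and re-derive consistency by conditioning on that margin. Alternatively, condition on the whole upward-component's labels and verify via non-signaling that two components see disjoint relevant information.

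Given the exact-marginal invariant, the final labeling is distributed as $\outcome$ on $G$. It is therefore valid with probability at least $1-1/n^c$, which proves the lemma.
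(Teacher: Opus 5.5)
Your locality bound is fine, but the correctness argument targets an invariant that is false in general. No Markov or separator argument can close the step you flag. You want the joint law of \emph{all} fixed labels to equal the corresponding marginal of $\outcome$. But components of the same layer can be at distance $2$: several rake trees of $\VR_j$ may hang off the same node of a higher component. Your algorithm samples them in parallel with independent randomness, so they are conditionally independent given everything above. Meanwhile $\outcome$ may correlate them arbitrarily, and nothing separates them. No choice of conditioning repairs this, which is why the paper only claims a local match.

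This also breaks your sampling rule itself, not just its justification (dropping labels beyond distance $2T$ confuses independence with conditional independence, as you later note). $B(C)$ need not lie in $C^+$. Suppose $C$ and two rake trees $Z',Z''$ of an intermediate layer all hang off the same node $x$. Then $Z'$ and $Z''$ are fixed before $C$ and lie within distance $2$ of $C$, but are not in $C^+$. Their joint law under your algorithm is wrong. So conditioning $C$ on them may condition on an event of $\outcome$-probability zero. It may also push $C$ into exactly the rare configurations on which $\outcome$ is allowed to fail.

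The missing idea is to condition only on $C^+\setminus C$ and prove only a local statement: for every component $C$, the algorithm's law on $C^+$ equals the marginal of $\outcome$ on $C^+$. This follows by induction down the layers. A rake component has at most one higher neighbor $D$, and $C^+=C\cup D^+$. A compress component has at most two higher neighbors $D_1,D_2$. Their upward-components are disjoint because the graph is a forest, and the algorithm generates them from disjoint randomness. They are at distance at least $2T+1$ because the path $C$ has at least $2T$ nodes, so they are independent under $\outcome$ too. Sampling $C$ conditioned on both sides then reproduces $\outcome$ on $C^+$.

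Every node constraint lies inside one component, and every edge joins a component to a higher neighbor. Hence each constraint is contained in some $C^+$ and is violated with probability at most the failure probability of $\outcome$. Run $\outcome$ with failure probability $n^{-(c+3)}$ and take a union bound over the at most $2n-1$ constraints; this gives success probability $1-n^{-c}$. Your proposal also lacks this boosting step: without a global invariant you cannot appeal directly to the overall success probability of $\outcome$.
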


\subsection{Sampling algorithm}

Let $n$ be the size of the input instance, and let $c > 0$ be the constant of ``with high probability''.
We fix the probability constant\footnotemark{} of the bounded-dependence distribution $\outcome$ to $c + 3$, we fix its locality $T = T(n)$.
Let
\[
    V = \VR_1 \cup \VC_1 \cup \VR_2 \cup \VC_2 \cup \cdots \cup \VR_L
\]
be a $(T, 2T)$-decomposition, which can be computed using \cref{thm:rcdecomp}.
\footnotetext{
    Some authors mean by ``with high probability'' the case of \(c = 1\).
    In this case, instead of picking the constant of ``with high probability'', we lie to the underlying distribution that the graph has \(n^3\) nodes; this gives us the needed boost in success probability while still keeping the locality of the underlying distribution \(n^{o(1)}\).
}

The randomized \local algorithm works by labeling the nodes according to this decomposition, starting from $\VR_L$ and proceeding downwards.
Each layer of the decomposition consists of multiple components which the algorithm handles in parallel:
for each component $C$ of the layer currently being processed, the algorithm samples $\outcome$ conditioned on what has been sampled in its neighboring higher components $C^+ \setminus C$.

\subsection{Analysis}

\begin{proof}[Proof of \cref{lem:sampling}]
	We first start by analyzing the locality of the above \local algorithm, and then we analyze its correctness.
    For the locality, note that each connected component induced by the nodes of any given layer have diameter at most \(O(T)\); this is because rake components have radius \(T\) by construction, and compress components have length at most \(4T\) by construction.
    The sampling also needs access to the previously-sampled values of the neighboring higher components \(C^+ \setminus C\); we handle this by carrying the values downwards with the sampling:
    when a component is handled, it can read the sampled values of the higher-level components from its neighboring components, do the local sampling, and propagate the previous sample and the new sample to all nodes of the component -- all with locality \(O(T)\).
    Therefore, the sampling within each component can be done with locality \(O(T)\).
    As there are $L = O(\log n)$ layers, the algorithm takes in total $O(T \log n)$ locality.
    
    It remains to prove that the algorithm works correctly with probability at least $1 - 1/n^c$:
    we prove this using the union bound over local failure probabilities.
    Intuitively, if the algorithm produces an error somewhere with too large of a probability, then there exists a sampling order that produces genuine sample from the original distribution with the same failure probability.
    
    Fix a node \(v\), and consider the probability \(p_v\) that the \local algorithm fails to label the node correctly, that is, the label of \(v\) is not in \(\calC_N\).
    Let \(V_i\) (either \(\VR_i\) or \(\VC_i\)) be the layer in which node \(v\) resides, and let \(C\) be the corresponding connected component of \(V_i\).
    Let \(C^+\) be the upward-component of \(C\).
    Consider now the \emph{global} sampling process where we sample the nodes of
    \(C^+\) from the highest layer downwards: first
    \(\VR_L \cap C^+\), then \(\VC_{L-1}\cap C^+\) conditioned on the output of
    \(\VR_L \cap C^+\), then \(\VR_{L-1}\cap C^+\) conditioned on both higher
    layers, and so on, until we reach layer \(V_i\).
    Note that this produces a genuine sample from distribution \(\outcome\) for all nodes of \(C^+\).

    The key observation is that this produced distribution on \(C^+\) matches exactly the distribution that the \local algorithm produces.\footnotemark
    \footnotetext{However, this does not necessarily hold for all nodes of the graph.}
    This is because the \local algorithm carries these sampling dependencies down.
    A rake component has at most one neighboring component in higher layers.  A
    compress component may have two such neighboring components, but the two
    upward-components rooted at them are separated by the compress path, which
    contains at least \(2T\) nodes.  Hence the two upward-components are at
    distance at least \(2T+1\), and their samples are independent in the
    original distribution \(\outcome\) by \cref{lem:dist}.  Thus sampling the
    two sides separately, and then sampling the compress component conditioned
    on both sides, gives the same distribution as the global top-down sampling.
    Therefore, the output distribution of node \(v\) must match that of the original distribution \(\outcome\), and in particular, it cannot fail with probability higher than \(1/n^{c+3}\), as otherwise the original distribution would fail with at least this probability.
    
    Using similar analysis, we can also bound the probability that the output of an edge violates \(\calC_E\) is at most \(1/n^{c+3}\).
    
    Now, for the \local algorithm to fail, it must fail locally at a node or at an edge.
    As the graph has \(n\) nodes and at most \(n-1\) edges, the total failure probability can be bounded by the union bound:
    \[
        \Pr\Biggl[\Bigl(\bigcup_{v \in V(G)} \text{node $v$ fails} \Bigr) \cup \Bigl(\bigcup_{e \in E(G)} \text{edge $e$ fails} \Bigr)\Biggr]
        \le n \cdot \frac{1}{n^{c+3}} + (n-1) \cdot \frac{1}{n^{c+3}}
        \le \frac{1}{n^c}
    \]
    Therefore, the \local algorithm works with probability at least \(1-1/n^c\), completing the proof.
\end{proof}

We can now proof the main theorem of this paper:
\begin{theorem}
	Let \(\outcome\) be a bounded-dependence distribution solving an LCL problem \(\Pi\) on forests with locality \(n^{o(1)}\) and with high probability.
	Then there exists a deterministic \local algorithm that solves problem \(\Pi\) with locality \(O(\log n)\).
\end{theorem}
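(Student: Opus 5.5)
The plan is to chain \cref{thm:rcdecomp}, \cref{lem:sampling}, and the Chang--Pettie gap theorem \cite{chang-pettie-2019-a-time-hierarchy-theorem-for-the}. Let \(T = T(n) = n^{o(1)}\) be the locality of \(\outcome\); by monotonicity of locality we may assume \(T \ge 1\) is an integer. The first step builds a randomized \local algorithm. Each node runs the deterministic algorithm of \cref{thm:rcdecomp} with parameters \(\gamma = T\) and \(\ell = 2T\), which yields a \((T,2T)\)-decomposition. Since \(\lceil \ell/\gamma\rceil = 2\), this decomposition has \(L = O(\log n)\) levels and takes time
\[
O\bigl((T + 4T^2 + 2T\log^* n)\cdot 2\log n\bigr) = O(T^2\log n + T\log n\log^* n).
\]
The decomposition is then fed to the algorithm of \cref{lem:sampling}. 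That algorithm solves \(\Pi\) with high probability and with additional locality \(O(T\log n)\). Composing the two stages gives a randomized \local algorithm for \(\Pi\) on forests with locality \(O(T^2\log^2 n)\) and failure probability at most \(1/n^c\).

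The second step checks that this locality is \(n^{o(1)}\). This holds because \(T = n^{o(1)}\) and \(\log^2 n = n^{o(1)}\), and products of such functions are again \(n^{o(1)}\). In particular, the locality is \(o(n^{\varepsilon})\) for every fixed \(\varepsilon > 0\).

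The third step invokes the gap result \cite{chang-pettie-2019-a-time-hierarchy-theorem-for-the}. On bounded-degree trees, any LCL solvable by a randomized \local algorithm in \(n^{o(1)}\) rounds can be solved deterministically in \(O(\log n)\) rounds. The degree bound holds because \(\Delta\) is a constant, being determined by the finite set \(\calC_N\). On forests, the result applies componentwise: each tree component is handled independently, and all nodes use the global \(n\) as the size parameter.

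I expect the main obstacle to be matching hypotheses with the cited gap theorem. Chang--Pettie is stated for LCLs in the standard Naor--Stockmeyer formulation on trees. To bridge this, we must note that node-edge checkable problems are equivalent to standard LCLs on trees, via \cite{balliu-censor-hillel-etal-2021-locally-checkable}. We must also confirm that the theorem's hypothesis, namely randomized complexity \(n^{o(1)}\) with failure probability \(1/n^c\), matches what \cref{lem:sampling} provides. If the cited theorem only covers the range \(n^{o(1)} \to O(\log n)\) for deterministic algorithms, I would first pass from randomized to deterministic. One way is the standard derandomization on bounded-degree graphs via \cite{chang-kopelowitz-pettie-2019-an-exponential-separation}, which gives deterministic complexity at most randomized complexity on \(2^{O(n^2)}\)-size instances. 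Here that is still \(n^{o(1)}\)-ish after reparametrization, since \(T(2^{n^2})\) must be checked. Alternatively, one can observe that the Chang--Pettie rake-and-compress argument directly accepts randomized algorithms, which is the route I would try first.
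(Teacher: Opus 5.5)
Your proposal follows the paper's proof: build a randomized algorithm from \cref{thm:rcdecomp} with \((\gamma,\ell)=(T,2T)\) and \cref{lem:sampling}, check that its locality is \(n^{o(1)}\), and apply the Chang--Pettie speedup. (Your \(O(T^2\log^2 n)\) is a looser but valid bound; the two localities actually add to \(O((T^2+T\log^* n)\log n)\).) The derandomization fallback you sketch would not work in general, since for \(T=\log^{O(1)} n\) the quantity \(T(2^{n^2})\) is polynomial in \(n\), but it is unnecessary: the Chang--Pettie theorem is stated directly for randomized \(n^{o(1)}\)-round algorithms on bounded-degree trees.
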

\begin{proof}
    Let \(T(n) = n^{o(1)}\) be the locality of the bounded-dependence distribution \(\outcome\).
	Combining \cref{thm:rcdecomp} and \cref{lem:sampling}, we get a randomized \local algorithm that solves \(\Pi\) with locality
	\[
	    O\bigl( (T^2(n) + T(n)\log^*n) \log n \bigr)
		= n^{o(1)} .
	\]
	We combine this with an existing speed-up result on LCL problems in trees to get an algorithm with locality \(O(\log n)\) \cite{chang-pettie-2019-a-time-hierarchy-theorem-for-the}.
\end{proof}

\clearpage
\bibliographystyle{plain}
\bibliography{biblio.bib}

@inproceedings{akbari-coiteux-roy-etal-2025-online-locality-meets,
  author = {Amirreza Akbari and Xavier Coiteux-Roy and Francesco D'Amore and Fran{\c{c}}ois {Le Gall} and Henrik Lievonen and Darya Melnyk and Augusto Modanese and Shreyas Pai and Marc-Olivier Renou and V{\'a}clav Rozhon and Jukka Suomela},
  editor = {Michal Kouck{\'y} and Nikhil Bansal},
  title = {Online Locality Meets Distributed Quantum Computing},
  booktitle = {Proceedings of the 57th Annual {ACM} Symposium on Theory of Computing, {STOC} 2025, Prague, Czechia, June 23-27, 2025},
  pages = {1295--1306},
  publisher = {{ACM}},
  year = {2025},
  doi = {10.1145/3717823.3718211}
}

@article{arfaoui-fraigniaud-2014-what-can-be-computed-without,
  author = {Heger Arfaoui and Pierre Fraigniaud},
  title = {What can be computed without communications?},
  journal = {SIGACT News},
  volume = {45},
  number = {3},
  pages = {82--104},
  year = {2014},
  doi = {10.1145/2670418.2670440}
}

@inproceedings{balliu-brandt-etal-2020-how-much-does-randomness-help,
  author = {Alkida Balliu and Sebastian Brandt and Dennis Olivetti and Jukka Suomela},
  editor = {Yuval Emek and Christian Cachin},
  title = {How much does randomness help with locally checkable problems?},
  booktitle = {{PODC} '20: {ACM} Symposium on Principles of Distributed Computing, Virtual Event, Italy, August 3-7, 2020},
  pages = {299--308},
  publisher = {{ACM}},
  year = {2020},
  doi = {10.1145/3382734.3405715}
}

@article{balliu-brandt-etal-2021-almost-global-problems-in-the,
  author = {Alkida Balliu and Sebastian Brandt and Dennis Olivetti and Jukka Suomela},
  title = {Almost global problems in the {LOCAL} model},
  journal = {Distributed Computing},
  volume = {34},
  number = {4},
  pages = {259--281},
  year = {2021},
  doi = {10.1007/S00446-020-00375-2}
}

@article{balliu-brandt-etal-2021-lower-bounds-for-maximal,
  author = {Alkida Balliu and Sebastian Brandt and Juho Hirvonen and Dennis Olivetti and Mika{\"e}l Rabie and Jukka Suomela},
  title = {Lower Bounds for Maximal Matchings and Maximal Independent Sets},
  journal = {Journal of the ACM},
  volume = {68},
  number = {5},
  pages = {39:1--39:30},
  year = {2021},
  doi = {10.1145/3461458}
}

@inproceedings{balliu-brandt-etal-2023-on-the-node-averaged-complexity,
  author = {Alkida Balliu and Sebastian Brandt and Fabian Kuhn and Dennis Olivetti and Gustav Schmid},
  editor = {Rotem Oshman},
  title = {On the Node-Averaged Complexity of Locally Checkable Problems on Trees},
  booktitle = {37th International Symposium on Distributed Computing, {DISC} 2023, October 10-12, 2023, L'Aquila, Italy},
  series = {LIPIcs},
  volume = {281},
  pages = {7:1--7:21},
  publisher = {Schloss Dagstuhl - Leibniz-Zentrum f{\"{u}}r Informatik},
  year = {2023},
  doi = {10.4230/LIPICS.DISC.2023.7}
}

@inproceedings{balliu-brandt-etal-2025-distributed-quantum-advantage,
  author = {Alkida Balliu and Sebastian Brandt and Xavier Coiteux-Roy and Francesco D'Amore and Massimo Equi and Fran{\c{c}}ois {Le Gall} and Henrik Lievonen and Augusto Modanese and Dennis Olivetti and Marc-Olivier Renou and Jukka Suomela and Lucas Tendick and Isadora Veeren},
  editor = {Michal Kouck{\'y} and Nikhil Bansal},
  title = {Distributed Quantum Advantage for Local Problems},
  booktitle = {Proceedings of the 57th Annual {ACM} Symposium on Theory of Computing, {STOC} 2025, Prague, Czechia, June 23-27, 2025},
  pages = {451--462},
  publisher = {{ACM}},
  year = {2025},
  doi = {10.1145/3717823.3718233}
}

@inproceedings{balliu-casagrande-etal-2026-distributed-quantum,
  author       = {Alkida Balliu and
                  Filippo Casagrande and
                  Francesco d'Amore and
                  Massimo Equi and
                  Barbara Keller and
                  Henrik Lievonen and
                  Dennis Olivetti and
                  Gustav Schmid and
                  Jukka Suomela},
  editor       = {Kasper Green Larsen and
                  Barna Saha},
  title        = {Distributed Quantum Advantage in Locally Checkable Labeling Problems},
  booktitle    = {Proceedings of the 2026 Annual {ACM-SIAM} Symposium on Discrete Algorithms,
                  {SODA} 2026, Vancouver, BC, Canada, January 11-14, 2026},
  pages        = {1268--1308},
  publisher    = {{SIAM}},
  year         = {2026},
  url          = {https://doi.org/10.1137/1.9781611978971.49},
  doi          = {10.1137/1.9781611978971.49},
  bibsource    = {dblp computer science bibliography, https://dblp.org}
}

@inproceedings{balliu-censor-hillel-etal-2021-locally-checkable,
  author = {Alkida Balliu and Keren Censor-Hillel and Yannic Maus and Dennis Olivetti and Jukka Suomela},
  editor = {Seth Gilbert},
  title = {Locally Checkable Labelings with Small Messages},
  booktitle = {35th International Symposium on Distributed Computing, {DISC} 2021, October 4-8, 2021, Freiburg, Germany (Virtual Conference)},
  series = {LIPIcs},
  volume = {209},
  pages = {8:1--8:18},
  publisher = {Schloss Dagstuhl - Leibniz-Zentrum f{\"{u}}r Informatik},
  year = {2021},
  doi = {10.4230/LIPICS.DISC.2021.8}
}

@inproceedings{balliu-coupette-etal-2025-new-limits-on-distributed,
  author = {Alkida Balliu and Corinna Coupette and Antonio Cruciani and Francesco D'Amore and Massimo Equi and Henrik Lievonen and Augusto Modanese and Dennis Olivetti and Jukka Suomela},
  editor = {Dariusz R. Kowalski},
  title = {New Limits on Distributed Quantum Advantage: Dequantizing Linear Programs},
  booktitle = {39th International Symposium on Distributed Computing, {DISC} 2025, October 27-31, 2025, Berlin, Germany},
  series = {LIPIcs},
  volume = {356},
  pages = {11:1--11:22},
  publisher = {Schloss Dagstuhl - Leibniz-Zentrum f{\"{u}}r Informatik},
  year = {2025},
  doi = {10.4230/LIPICS.DISC.2025.11}
}

@inproceedings{balliu-ghaffari-etal-2025-shared-randomness-helps-with,
  author = {Alkida Balliu and Mohsen Ghaffari and Fabian Kuhn and Augusto Modanese and Dennis Olivetti and Mika{\"e}l Rabie and Jukka Suomela and Jara Uitto},
  editor = {Keren Censor-Hillel and Fabrizio Grandoni and Jo{\"e}l Ouaknine and Gabriele Puppis},
  title = {Shared Randomness Helps with Local Distributed Problems},
  booktitle = {52nd International Colloquium on Automata, Languages, and Programming, {ICALP} 2025, July 8-11, 2025, Aarhus, Denmark},
  series = {LIPIcs},
  volume = {334},
  pages = {16:1--16:18},
  publisher = {Schloss Dagstuhl - Leibniz-Zentrum f{\"{u}}r Informatik},
  year = {2025},
  doi = {10.4230/LIPICS.ICALP.2025.16}
}

@inproceedings{balliu-hirvonen-etal-2018-new-classes-of-distributed,
  author = {Alkida Balliu and Juho Hirvonen and Janne H. Korhonen and Tuomo Lempi{\"a}inen and Dennis Olivetti and Jukka Suomela},
  editor = {Ilias Diakonikolas and David Kempe and Monika Henzinger},
  title = {New classes of distributed time complexity},
  booktitle = {Proceedings of the 50th Annual {ACM} {SIGACT} Symposium on Theory of Computing, {STOC} 2018, Los Angeles, CA, USA, June 25-29, 2018},
  pages = {1307--1318},
  publisher = {{ACM}},
  year = {2018},
  doi = {10.1145/3188745.3188860}
}

@article{barenboim-elkin-goldenberg-2022-locally-iterative,
  author = {Leonid Barenboim and Michael Elkin and Uri Goldenberg},
  title = {Locally-iterative Distributed {$(\Delta+1)$}-coloring and Applications},
  journal = {Journal of the ACM},
  volume = {69},
  number = {1},
  pages = {5:1--5:26},
  year = {2022},
  doi = {10.1145/3486625}
}

@inproceedings{brandt-2019-an-automatic-speedup-theorem-for,
  author = {Sebastian Brandt},
  editor = {Peter Robinson and Faith Ellen},
  title = {An Automatic Speedup Theorem for Distributed Problems},
  booktitle = {Proceedings of the 2019 {ACM} Symposium on Principles of Distributed Computing, {PODC} 2019, Toronto, ON, Canada, July 29 - August 2, 2019},
  pages = {379--388},
  publisher = {{ACM}},
  year = {2019},
  doi = {10.1145/3293611.3331611}
}

@inproceedings{brandt-fischer-etal-2016-a-lower-bound-for-the,
  author = {Sebastian Brandt and Orr Fischer and Juho Hirvonen and Barbara Keller and Tuomo Lempi{\"a}inen and Joel Rybicki and Jukka Suomela and Jara Uitto},
  editor = {Daniel Wichs and Yishay Mansour},
  title = {A lower bound for the distributed {L}ov{\'a}sz local lemma},
  booktitle = {Proceedings of the 48th Annual {ACM} {SIGACT} Symposium on Theory of Computing, {STOC} 2016, Cambridge, MA, USA, June 18-21, 2016},
  pages = {479--488},
  publisher = {{ACM}},
  year = {2016},
  doi = {10.1145/2897518.2897570}
}

@inproceedings{censor-hillel-fischer-etal-2022-quantum-distributed,
  author = {Keren Censor-Hillel and Orr Fischer and Fran{\c{c}}ois {Le Gall} and Dean Leitersdorf and Rotem Oshman},
  editor = {Mark Braverman},
  title = {Quantum Distributed Algorithms for Detection of Cliques},
  booktitle = {13th Innovations in Theoretical Computer Science Conference, {ITCS} 2022, January 31 - February 3, 2022, Berkeley, CA, {USA}},
  series = {LIPIcs},
  volume = {215},
  pages = {35:1--35:25},
  publisher = {Schloss Dagstuhl - Leibniz-Zentrum f{\"{u}}r Informatik},
  year = {2022},
  doi = {10.4230/LIPICS.ITCS.2022.35}
}

@inproceedings{chang-2020-the-complexity-landscape-of-distributed,
  author = {Yi-Jun Chang},
  editor = {Hagit Attiya},
  title = {The Complexity Landscape of Distributed Locally Checkable Problems on Trees},
  booktitle = {34th International Symposium on Distributed Computing, {DISC} 2020, October 12-16, 2020, Virtual Conference},
  series = {LIPIcs},
  volume = {179},
  pages = {18:1--18:17},
  publisher = {Schloss Dagstuhl - Leibniz-Zentrum f{\"{u}}r Informatik},
  year = {2020},
  doi = {10.4230/LIPICS.DISC.2020.18}
}

@inproceedings{chang-he-etal-2018-the-complexity-of-distributed-edge,
  author = {Yi-Jun Chang and Qizheng He and Wenzheng Li and Seth Pettie and Jara Uitto},
  editor = {Artur Czumaj},
  title = {The Complexity of Distributed Edge Coloring with Small Palettes},
  booktitle = {Proceedings of the Twenty-Ninth Annual {ACM-SIAM} Symposium on Discrete Algorithms, {SODA} 2018, New Orleans, LA, USA, January 7-10, 2018},
  pages = {2633--2652},
  publisher = {{SIAM}},
  year = {2018},
  doi = {10.1137/1.9781611975031.168}
}

@article{chang-kopelowitz-pettie-2019-an-exponential-separation,
  author = {Yi-Jun Chang and Tsvi Kopelowitz and Seth Pettie},
  title = {An Exponential Separation between Randomized and Deterministic Complexity in the {LOCAL} Model},
  journal = {SIAM Journal on Computing},
  volume = {48},
  number = {1},
  pages = {122--143},
  year = {2019},
  doi = {10.1137/17M1117537}
}

@article{chang-pettie-2019-a-time-hierarchy-theorem-for-the,
  author = {Yi-Jun Chang and Seth Pettie},
  title = {A Time Hierarchy Theorem for the {LOCAL} Model},
  journal = {SIAM Journal on Computing},
  volume = {48},
  number = {1},
  pages = {33--69},
  year = {2019},
  doi = {10.1137/17M1157957}
}

@inproceedings{coiteux-roy-d-amore-etal-2024-no-distributed-quantum,
  author = {Xavier Coiteux-Roy and Francesco D'Amore and Rishikesh Gajjala and Fabian Kuhn and Fran{\c{c}}ois {Le Gall} and Henrik Lievonen and Augusto Modanese and Marc-Olivier Renou and Gustav Schmid and Jukka Suomela},
  editor = {Bojan Mohar and Igor Shinkar and Ryan O'Donnell},
  title = {No Distributed Quantum Advantage for Approximate Graph Coloring},
  booktitle = {Proceedings of the 56th Annual {ACM} Symposium on Theory of Computing, {STOC} 2024, Vancouver, BC, Canada, June 24-28, 2024},
  pages = {1901--1910},
  publisher = {{ACM}},
  year = {2024},
  doi = {10.1145/3618260.3649679}
}

@article{d-amore-2025-on-the-limits-of-distributed-quantum,
  _nodoi = {true},
  author = {Francesco d'Amore},
  journal = {Bulletin of the EATCS},
  pages = {51--91},
  title = {On the limits of distributed quantum computing},
  url = {http://bulletin.eatcs.org/index.php/beatcs/article/view/829},
  volume = {145},
  year = {2025}
}

@inproceedings{dahal-d-amore-etal-2023-brief-announcement-distributed,
  author = {Sameep Dahal and Francesco D'Amore and Henrik Lievonen and Timoth{\'e} Picavet and Jukka Suomela},
  editor = {Rotem Oshman},
  title = {Brief Announcement: Distributed Derandomization Revisited},
  booktitle = {37th International Symposium on Distributed Computing, {DISC} 2023, October 10-12, 2023, L'Aquila, Italy},
  series = {LIPIcs},
  volume = {281},
  pages = {40:1--40:5},
  publisher = {Schloss Dagstuhl - Leibniz-Zentrum f{\"{u}}r Informatik},
  year = {2023},
  doi = {10.4230/LIPICS.DISC.2023.40}
}

@inproceedings{dhar-kujawa-etal-2024-local-problems-in-trees-across-a,
  author = {Anubhav Dhar and Eli Kujawa and Henrik Lievonen and Augusto Modanese and Mikail Muftuoglu and Jan Studen{\'y} and Jukka Suomela},
  editor = {Silvia Bonomi and Letterio Galletta and Etienne Rivi{\`e}re and Valerio Schiavoni},
  title = {Local Problems in Trees Across a Wide Range of Distributed Models},
  booktitle = {28th International Conference on Principles of Distributed Systems, {OPODIS} 2024, December 11-13, 2024, Lucca, Italy},
  series = {LIPIcs},
  volume = {324},
  pages = {27:1--27:17},
  publisher = {Schloss Dagstuhl - Leibniz-Zentrum f{\"{u}}r Informatik},
  year = {2024},
  doi = {10.4230/LIPICS.OPODIS.2024.27}
}

@inproceedings{fraigniaud-heinrich-kosowski-2016-local-conflict,
  author = {Pierre Fraigniaud and Marc Heinrich and Adrian Kosowski},
  editor = {Irit Dinur},
  title = {Local Conflict Coloring},
  booktitle = {{IEEE} 57th Annual Symposium on Foundations of Computer Science, {FOCS} 2016, 9-11 October 2016, Hyatt Regency, New Brunswick, New Jersey, {USA}},
  pages = {625--634},
  publisher = {{IEEE} Computer Society},
  year = {2016},
  doi = {10.1109/FOCS.2016.73}
}

@inproceedings{gavoille-kosowski-markiewicz-2009-what-can-be-observed,
  author = {Cyril Gavoille and Adrian Kosowski and Marcin Markiewicz},
  editor = {Idit Keidar},
  title = {What Can Be Observed Locally?},
  booktitle = {Distributed Computing, 23rd International Symposium, {DISC} 2009, Elche, Spain, September 23-25, 2009. Proceedings},
  series = {Lecture Notes in Computer Science},
  volume = {5805},
  pages = {243--257},
  publisher = {Springer},
  year = {2009},
  doi = {10.1007/978-3-642-04355-0_26}
}

@inproceedings{izumi-le-gall-2019-quantum-distributed-algorithm-for,
  author = {Taisuke Izumi and Fran{\c{c}}ois {Le Gall}},
  editor = {Peter Robinson and Faith Ellen},
  title = {Quantum Distributed Algorithm for the All-Pairs Shortest Path Problem in the {CONGEST-CLIQUE} Model},
  booktitle = {Proceedings of the 2019 {ACM} Symposium on Principles of Distributed Computing, {PODC} 2019, Toronto, ON, Canada, July 29 - August 2, 2019},
  pages = {84--93},
  publisher = {{ACM}},
  year = {2019},
  doi = {10.1145/3293611.3331628}
}

@inproceedings{le-gall-2022-quantum-distributed-computing,
  _nodoi = {true},
  author = {Fran{\c{c}}ois {Le Gall}},
  booktitle = {11th Workshop on Advances in Distributed Graph Algorithms (ADGA 2022)},
  title = {Quantum Distributed Computing},
  url = {https://adga-workshop.org/2022/legall.pdf},
  year = {2022}
}

@inproceedings{le-gall-magniez-2018-sublinear-time-quantum-computation,
  author = {Fran{\c{c}}ois {Le Gall} and Fr{\'e}d{\'e}ric Magniez},
  title = {Sublinear-Time Quantum Computation of the Diameter in {CONGEST} Networks},
  booktitle = {Proceedings of the 2018 ACM Symposium on Principles of Distributed Computing},
  publisher = {ACM},
  pages = {337--346},
  year = {2018},
  doi = {10.1145/3212734.3212744}
}

@inproceedings{le-gall-nishimura-rosmanis-2019-quantum-advantage-for,
  author = {Fran{\c{c}}ois {Le Gall} and Harumichi Nishimura and Ansis Rosmanis},
  editor = {Rolf Niedermeier and Christophe Paul},
  title = {Quantum Advantage for the {LOCAL} Model in Distributed Computing},
  booktitle = {36th International Symposium on Theoretical Aspects of Computer Science, {STACS} 2019, March 13-16, 2019, Berlin, Germany},
  series = {LIPIcs},
  volume = {126},
  pages = {49:1--49:14},
  publisher = {Schloss Dagstuhl - Leibniz-Zentrum f{\"{u}}r Informatik},
  year = {2019},
  doi = {10.4230/LIPICS.STACS.2019.49}
}

@article{linial-1992-locality-in-distributed-graph-algorithms,
  author = {Nathan Linial},
  title = {Locality in Distributed Graph Algorithms},
  journal = {SIAM Journal on Computing},
  volume = {21},
  number = {1},
  pages = {193--201},
  year = {1992},
  doi = {10.1137/0221015}
}

@inproceedings{maus-tonoyan-2020-local-conflict-coloring-revisited,
  author = {Yannic Maus and Tigran Tonoyan},
  editor = {Hagit Attiya},
  title = {Local Conflict Coloring Revisited: Linial for Lists},
  booktitle = {34th International Symposium on Distributed Computing, {DISC} 2020, October 12-16, 2020, Virtual Conference},
  series = {LIPIcs},
  volume = {179},
  pages = {16:1--16:18},
  publisher = {Schloss Dagstuhl - Leibniz-Zentrum f{\"{u}}r Informatik},
  year = {2020},
  doi = {10.4230/LIPICS.DISC.2020.16}
}

@inproceedings{miller-reif-1985-parallel-tree-contraction-and-its,
  author = {Gary L. Miller and John H. Reif},
  title = {Parallel Tree Contraction and Its Application},
  booktitle = {26th Annual Symposium on Foundations of Computer Science, Portland, Oregon, USA, 21-23 October 1985},
  pages = {478--489},
  publisher = {{IEEE} Computer Society},
  year = {1985},
  doi = {10.1109/SFCS.1985.43}
}

@article{naor-stockmeyer-1995-what-can-be-computed-locally,
  author = {Moni Naor and Larry J. Stockmeyer},
  title = {What Can be Computed Locally?},
  journal = {SIAM Journal on Computing},
  volume = {24},
  number = {6},
  pages = {1259--1277},
  year = {1995},
  doi = {10.1137/S0097539793254571}
}

@inproceedings{suomela-2020-landscape-of-locality-invited-talk,
  author = {Jukka Suomela},
  editor = {Susanne Albers},
  title = {Landscape of Locality (Invited Talk)},
  booktitle = {17th Scandinavian Symposium and Workshops on Algorithm Theory, {SWAT} 2020, June 22-24, 2020, T{\'{o}}rshavn, Faroe Islands},
  series = {LIPIcs},
  volume = {162},
  pages = {2:1--2:1},
  publisher = {Schloss Dagstuhl - Leibniz-Zentrum f{\"{u}}r Informatik},
  year = {2020},
  doi = {10.4230/LIPICS.SWAT.2020.2}
}

\clearpage
\appendix

\section{Preliminaries}
\label{sec:preliminaries}

In this section, we introduce all the basic definitions and notations that we will use throughout the paper, and we also give a brief overview of the models of computation and probabilistic models that we will be working with.

\subsubsection*{Graphs} 
We work with simple, undirected graphs \(G = (V, E)\), where \(V\) is the set of nodes and \(E\) is the set of edges.
When the sets of nodes and edges are not specified, we can refer to them as \(V(G)\) and \(E(G)\), respectively.
Given any two nodes \(u, v \in V\), we denote by \(\dist(u,v)\) the distance between \(u\) and \(v\) in \(G\), that is, the length of the shortest path between \(u\) and \(v\) in \(G\).
The distance between a node \(v\) and a set of nodes \(S \subseteq V\) is defined as \(\dist(v,S) = \min_{u \in S} \dist(v,u)\).
Similarly, the distance between two sets of nodes \(S, T \subseteq V\) is defined as \(\dist(S,T) = \min_{u \in S, v \in T} \dist(u,v)\).
The diameter of \(G\) is defined as \(\diam(G) = \max_{u,v \in V} \dist(u,v)\).
The \emph{eccentricity} of a node \(v\) is defined as the maximum length of a shortest path from \(v\) to any other node in \(G\), that is, \(\ecc(v) = \max_{u \in V} \dist(u,v)\).
For any node \(v \in V\), we denote by \(\neighborhood(v) = \{u \in V \, : \, \dist(u,v) = 1\}\) the \emph{open neighborhood} of \(v\) in \(G\). 
All the elements of \(\neighborhood(v)\) are called the \emph{neighbors} of \(v\).
We also define the \emph{closed neighborhood} of \(v\) as \(\neighborhood[v] = \neighborhood(v) \cup \{v\}\).
For any integer \(r \geq 0\), we denote by \(\neighborhood_r[v] = \{u \in V \, : \, \dist(u,v) \leq r\}\) the set of nodes that are at distance at most \(r\) from \(v\) in \(G\), and by \(\neighborhood_r(v) = \{u \in V \, : \, \dist(u,v) = r\}\) the set of nodes that are at distance exactly \(r\) from \(v\) in \(G\).
We extend the definition of neighborhood to sets of nodes as well, by defining \(\neighborhood(S) = \{u \in V \, : \, \dist(u,S) = 1\}\), \(\neighborhood[S] = \{u \in V \, : \, \dist(u,S) \le 1\}\), \(\neighborhood_r(S) = \{u \in V \, : \, \dist(u,S) = r\}\), and \(\neighborhood_r[v] = \{u \in V \, : \, \dist(u,S) \le r\}\) for any \(S \subseteq V\).
Given a set of nodes \(S \subseteq V\), we denote by \(G[S]\) the subgraph of \(G\) induced by the nodes of \(S\).
Given two graphs \(G = (V_G, E_G)\) and \(H = (V_H, E_H)\), we say that \(G\) and \(H\) are isomorphic if there exists a bijection \(\varphi: V_G \to V_H\) such that \(\{u,v\} \in E_G\) if and only if \(\{\varphi(u),\varphi(v)\} \in E_H\).
The union of two graphs \(G = (V_G, E_G)\) and \(H = (V_H, E_H)\) is defined as the graph \(G \cup H = (V_G \cup V_H, E_G \cup E_H)\).
The difference of two graphs \(G = (V_G, E_G)\) and \(H = (V_H, E_H)\) is defined as the graph \(G \setminus H = (V_G, E_G \setminus E_H)\).

\subsubsection*{Half-edges}
Consider any graph \(G = (V,E)\).
We define the set of half-edges of \(G\) as the set \(\halfEdges(G) = \{(v,e) \, : \, v \in V, e \in E, v \in e\}\).
Any element of \(\halfEdges(G)\) is called a \emph{half-edge} of \(G\), and it can be thought of as the part of the edge that is incident to only one of its endpoints.
Hence, any edge \(e = \{u,v\} \in E\) corresponds to two half-edges, namely \((u,e)\) and \((v,e)\).

\subsubsection*{Edge-labeled graph}
Let \(\Sigma\) be a set of labels.
A \(\Sigma\)-edge-labeled graph is a tuple \((G = (V,E), \lambda)\) where \(G\) is a graph, and \(\lambda: \halfEdges(G) \to \Sigma\) is a function that assigns a label from \(\Sigma\) to each half-edge in \(\halfEdges(G)\).
The function \(\lambda\) is called the \emph{labeling} of \(G\).

\subsubsection*{Centered graphs} 
Let \(r \ge 0\) be an integer.
A centered graph of radius \(r\) is a tuple \((G, v)\) where \(G\) is a graph and \(v \in V(G)\) is a node such that the eccentricity of \(v\) in \(G\) is at exactly \(r\), that is, \(\ecc(v) = r\), with the following properties: 
First, we have \(\neighborhood_r[v] = V(G)\).
Second, if \(u_1,u_2 \in V(G)\) are such that \(\dist(v,u_1) = \dist(v,u_2) = r\), then \(u_1\) and \(u_2\) are not adjacent in \(G\). 
Note that this matches exactly what an $r$-round \local algorithm can gather about the graph at every node.

\subsubsection*{Set of constraints}
Let \(\Sigma\) be a set of labels, and let \(r, \Delta \ge 0\) be integers.
An \((r,\Delta)\)-set of constraints over \(\Sigma\) is a family \(\calC = \{(G_i,\lambda_i,v_i)\}_{i \in I}\) where, for each \((G_i, \lambda_i, v_i) \in \calC\), we have that \((G_i,v_i)\) is a centered graph of radius at most \(r\) and maximum degree at most \(\Delta\), \(\lambda_i \colon \calH(G_i) \to \Sigma\) is a labeling function, and \((G_i,\lambda_i)\) is a \(\Sigma\)-edge-labeled graph.

A \(\Sigma\)-edge-labeled graph \((G,\lambda)\) is said to \emph{satisfy} \(\calC\) if for any node \(v \in V(G)\), there exists some \((G_i,\lambda_i,v_i) \in \calC\) such that the \(G[\neighborhood_r[v]]\setminus G[\neighborhood_r(v)]\) in \(G\) is isomorphic to \(G_i\), and the isomorphism maps \(v\) to \(v_i\) and preserves the half-edge labels.

\subsubsection*{Locally checkable labeling problem}
Let \(\Sigma_{\textrm{in}}\), \(\Sigma_{\textrm{out}}\) be two finite sets of labels, and \(r,\Delta \ge 0\) be integers.
A \emph{locally checkable labeling} (LCL) problem \(\Pi\) is a tuple \((\Sigma_{\textrm{in}},\Sigma_{\textrm{out}}, r, \Delta, \calC)\), where \(\calC\) is a finite \((r,\Delta)\)-set of constraints over \(\Sigma_{\textrm{in}} \times \Sigma_{\textrm{out}}\).

\emph{Solving} \(\Pi\) means the following: 
given as input a \(\Sigma_{\textrm{in}}\)-edge-labeled graph \((G,\lambda_{\textrm{in}})\), where \(\lambda_{\textrm{in}}\colon \calH(G) \to \Sigma_{\textrm{in}}\) is the \emph{input} labeling of the graph, the goal is to produce an \emph{output} labeling \(\lambda_{\textrm{out}} \colon \calH(G) \to \Sigma_{\textrm{out}}\) such that, having defined \(\lambda \colon \calH(G) \to \Sigma_{\textrm{in}} \times \Sigma_{\textrm{out}}\) with \(\lambda(v,e) = (\lambda_{\textrm{in}}(v,e), \lambda_{\textrm{out}}(v,e))\),
we have that \((G, \lambda)\) satisfies \(\calC\).

\subsubsection*{Node-edge checkable problems}
We define an alternative notion of locally checkable labeling problems which at first seems more restrictive than the one defined above, but is in fact equivalent to it \cite{balliu-censor-hillel-etal-2021-locally-checkable}.
Let \(\Sigma_{\textrm{in}}\), \(\Sigma_{\textrm{out}}\) be two finite sets of labels, and \(\Delta \ge 0\) be an integer.
We specify the LCL problem through two sets of constraints \(\calC_V\) and \(\calC_E\).
The first set \(\calC_V\) is a set of elements, which we call \emph{node constraints}, where each element is a multiset of at most of at most \(\Delta\) elements, and each element belongs to \(\Sigma_{\textrm{in}} \times \Sigma_{\textrm{out}}\).
The second set \(\calC_E\) is a set of elements, which we call \emph{edge constraint}, where each element is a multiset of two elements, and each element of the multiset belongs to \(\Sigma_{\textrm{in}} \times \Sigma_{\textrm{out}}\).
Hence, the LCL problem can be defined via the tuple \((\Sigma_{\textrm{in}},\Sigma_{\textrm{out}}, \Delta, \calC_V, \calC_E)\).

\emph{Solving} \(\Pi\) means the following:
given as input a \(\Sigma_{\textrm{in}}\)-edge-labeled graph \((G,\lambda_{\textrm{in}})\), where \(\lambda_{\textrm{in}}\colon \calH(G) \to \Sigma_{\textrm{in}}\) is the \emph{input} labeling of the graph, the goal is to produce an \emph{output} labeling \(\lambda_{\textrm{out}} \colon \calH(G) \to \Sigma_{\textrm{out}}\) such that the following holds:
\begin{itemize}
    \item For each node \(v \in V(G)\), the multiset of labels \(\{\lambda(v,e) \, : \, e \in E(G), v \in e\}\) belongs to \(\calC_V\).
    \item For each edge \(e = \{u,v\} \in E(G)\), the multiset of labels \(\{\lambda(u,e), \lambda(v,e)\}\) belongs to \(\calC_E\).
\end{itemize}
It has been proven that the two definitions of the LCL problems are equivalent in trees for the LOCAL model, in the sense that, for any LCL problem defined according to the first definition, there exists an equivalent LCL problem defined according to the second definition, and vice versa, and there are constant-time LOCAL algorithms that can transform any solution of the first problem into a solution of the second problem, and vice versa \cite{balliu-censor-hillel-etal-2021-locally-checkable}.
For simplicity, in the rest of the paper we will just work with node-edge checkable problems.

\subsection{Models of computation}

\subsubsection*{\texorpdfstring{The \local model}{The LOCAL model}}
We work in the classical \local model of distributed computation 
\cite{linial-1992-locality-in-distributed-graph-algorithms}. 
In this model, we are given \(n\) computing units (called \emph{nodes}) that are connected by a communication network.
The communication network is modeled as an undirected graph \(G = (V, E)\) where the nodes of the graph correspond to the computing units, and the edges correspond to communication links between the nodes.
The nodes run the same algorithm and operate in synchronous rounds, and in each round, the nodes perform the following steps: 
First, they perform some local computation based on their current state and the information they have received from their neighbors in the previous rounds.
Then, they send messages to and receive messages from their neighbors.
Note that in the \local model there is no limit on the size of the messages that can be sent in each round, and there is no limit on the amount of local computation that can be performed by the nodes in each round.
At the beginning of computation, all nodes are identical, except for a unique identifier of \(O(\log n)\) bits, which is used to break symmetry in the computation, their degree, and the port numbers of their incident edges.
The nodes may also have access to some additional (distinguished) input which is part of the problem specification.
The computation ends when all nodes have decided on their output, and the running time of the algorithm is the number of communication rounds until all nodes have decided on their output.
The complexity of a problem in the \local model is measured by the minimum number of rounds of any algorithm that solves the problem in the \local model.
Nodes may be given access to independent sources of infinite random bits strings, in which case we talk about the randomized \local model as opposed to the deterministic \local model.
In the randomized \local model, we require that a problem is solved with high probability, that is, for any given \(c > 0\), the algorithm must solve the problem with probability at least \(1 - 1/n^c\).

\subsubsection*{Input graph} 
As already mentioned, the input graph is the network communication graph.
Nodes have unique identifiers of size \(O(\log n)\) bits, where \(n\) is the number of nodes in the network, and can distinguish the communication links towards their neighbors through port numbers from \(1\) up to the degree of the node. 
Both identifiers and port numbers are adversarially chosen and part of the input graph, unrelated to the specific problem we are solving.
The input labels of the problem of interest, the identifiers and port numbers, and the inital states of the nodes, possibly together with the random bits, are encoded in some local variable \(\inpt(v)\), where \(v \in V(G)\).

\subsubsection*{Local view}
Let \(r \ge 0\) be any integer.
Given a graph \((G, \inpt)\) as input to some problem in the \local model, for any \(v \in V(G)\), we define the \emph{radius-\(r\) view of \(v\)}, denoted by \(\view_r(v,G,\inpt)\), as the graph \(G_r(v) = G[\neighborhood_r[v]]\setminus G[\neighborhood_r(v)]\), where each node \(u \in V(G_r(v))\) is equipped with its own local variable \(\inpt(u)\).
The radius-\(r\) view of a subset of nodes \(A \subseteq V(G)\) is analogously defined, and is denoted by \(\view_r(A,G,\inpt)\).
Given two input graphs \((G,\inpt_G)\) and \((H, \inpt_H)\), and two subsets of nodes \(A \subseteq V(G)\) and \(B \subseteq B(H)\), we say that  \(\view_r(A,G,\inpt_G)\) is isomorphic to \(\view_r(B,H,\inpt_H)\) 
if there exists an isomorphism \(\varphi\) between \(G_r(A)\) and \(H_r(B)\) such that the following holds:
\begin{enumerate}
    \item \(\varphi \restriction_A\) is an isomorphism between \(G[A]\) and \(H[B]\).
    \item \(\inpt_H(\varphi(v)) = \inpt_G(v)\) for each \(v \in V(G_r(A))\).
\end{enumerate}

\subsubsection*{\texorpdfstring{The quantum-\local model}{The quantum-LOCAL model}}

The quantum-\local model is a quantum extension of the \local model, where the nodes are quantum computing units that exchange quantum messages.
More specifically, the nodes can manipulate an arbitrary number of qubits by applying unitaries, perform any kind of measurements on their local states, and exchange messages of arbitrary size with their neighbors.
The output of the algorithm is determined by the measurement outcomes of the nodes at the end of the computation, and the complexity of a problem in the quantum-\local model is measured by the minimum number of rounds of any quantum algorithm that solves the problem in the quantum-\local model.
Similarly to the randomized \local model, we require that a problem is solved with high probability in the quantum-\local model as well.
For a formal definition of the model, see \cite{gavoille-kosowski-markiewicz-2009-what-can-be-observed}.

\subsection{Probabilistic models}

Note that randomized and quantum-\local algorithms produce probability distributions over output labelings of the input graph.
We formalize this concept following the definitions of Akbari et al.\ \cite{akbari-coiteux-roy-etal-2025-online-locality-meets}.

\subsubsection*{Outcomes}
Let \(\Sigma\) be a set of labels.
An outcome is a function \(\outcome \colon (G,\inpt) \mapsto \{(\lambda_i, p_i)\}_{i \in I} \) mapping each input graph \((G,\inpt)\) to a probability distribution over labeling functions, that is, a family \(\{(\lambda_i, p_i)\}_{i \in I}\) where \(I \subseteq \reals\) is any Lebesgue-measurable set of indices, \(\lambda_i \colon \calH(G) \to \Sigma\) is a labeling function, and \(p_i\) is the probability that \(G\) is actually labeled by \(\lambda_i\).
Note that it must hold that \(\sum_{i \in I} p_i = 1\), and that each \(p_i\) is non-negative.
If \(I\) is uncountable, \(p\colon I \to [0,1]\) is a measurable density function and the sum is meant to be an integral, that is, \(\int_{I} p(x) \dd{x} = 1 \).

We remark that an outcome must be defined \emph{for every input graph}, just like a \local algorithm can be executed on every input graph.
Indeed, if at any node of a graph and at any round, the computation specified by an algorithm is undefined, we can assume that the node can always output some \emph{garbage label} \(\bot\).

Given a subgraph \(G' \subseteq G\), the \emph{restriction} 
\(\outcome(G,\inpt) \restriction_{(G', \inpt \restriction_{V(G')})}\) 
of the 
probability distribution \(\outcome(G,\inpt) = \{(\lambda_i, p_i)\}_{i \in I}\) to \((G', \inpt \restriction_{V(G')})\) is the probability distribution \(\{(\eta_j, q_j)\}_{j \in J}\) where \(J \subseteq I\), \(\eta_j \colon \calH(G') \to \Sigma\) is such that there exists \(i \in I\) with \(\lambda_i \restriction_{\calH(G')} = \eta_j\), and 
\[
    q_j = \sum_{i \in I \, : \, \lambda_i \restriction_{\calH(G')} = \eta_j} p_i.
\]
The sum is replaced by an integral when suitable.

\subsubsection*{Non-signaling outcomes}
Let \(\outcome\) be any outcome.
We say that \(\outcome\) is non-signaling beyond distance \(T = T(n)\) if, for any two input graphs \((G, \inpt_G)\) and \((H,\inpt_H)\) of \(n\) nodes such that there exists two subsets of nodes \(A \subseteq V(G)\) and \(B \subseteq V(H)\) with \(\view_T(A,G,\inpt_G)\) that is isomorphic to \(\view_T(A,H,\inpt_H)\), then the probability distributions \(\outcome(G,\inpt_G) \restriction_{(G[A], \inpt_G \restriction_A)}\) and \(\outcome(H,\inpt_H) \restriction_{(H[B], \inpt_H \restriction_B)}\) are the same.

\subsubsection*{\texorpdfstring{\(\mathbf{T}\)-dependent distributions}{T-dependent distributions}}
Given an input graph \((G, \inpt_G)\) and a probability distribution \(\{(\lambda_i, p_i)\}_{i \in I}\) over labelings of \(G\), we say that \(\{(\lambda_i, p_i)\}_{i \in I}\) is \(T\)-dependent if, for any two subsets of nodes \(A,B \subseteq V(G)\) such that \(\dist(A,B) > T\), the probability distributions obtained by the restrictions of \(\{(\lambda_i, p_i)\}_{i \in I}\) to \((G[A], \inpt \restriction_A)\) and to \((G[B], \inpt \restriction_B)\) are independent.

\subsubsection*{\texorpdfstring{The bounded-dependence model}{The bounded-dependence model}}
In the bounded-dependence model, the input graph is as in the classical \local model, and we produce outcomes.
We say that an outcome \(\outcome\) in the bounded-dependence model has locality \(T = T(n)\) if the following properties are satisfied:
\begin{enumerate}
    \item \(\outcome\) is non signaling beyond distance \(T\).
    \item Given an input graph \((G,\inpt)\), the probability distribution \(\outcome(G,\inpt)\) is \(2T\)-dependent.
\end{enumerate}
If \(T(n) = O(1)\) independently of \(n\), we say that the probability distributions generated by \(\outcome\) are \emph{finitely-dependent}.
Similarly to the classical \local model, an outcome is required to solve the problem with high probability.

\section{Hierarchical decomposition of a tree}\label{sec:tree-decomposition}

In this section, we show how to perform a hierarchical decomposition of a tree into smaller pieces.
At the end of this decomposition, the nodes of the tree are decomposed into a sequence of pairwise disjoint sets \(\VR_{1}, \VC_{1}, \VR_{2}, \VC_{2}, \ldots\), where the subscript and superscript together indicate the \emph{level} of the decomposition, and the superscript indicates the \emph{type} of the level, which is either \emph{rake} (denoted by \(\mathsf{R}\)) or \emph{compress} (denoted by \(\mathsf{C}\)).
We denote the forest induced by nodes $\VR_i \cup \VC_i \cup \VR_{i+1} \cup \VC_{i+1} \cup \cdots$ by $\GR_i$, and the forest induced by nodes $\VC_i \cup \VR_{i+1} \cup \VC_{i+1} \cup \cdots$ by $\GC_i$.
Let $L$ be the smallest integer \(i\) such that $\GC_i$ is the empty graph.

Given two parameters $\ell \ge 1$ and $\gamma \ge 1$, the decomposition satisfies the following two key properties:
\begin{enumerate}
	\item Each connected component of the subgraph induced by $\VR_i$ must be a rooted tree of height at most $\gamma-1$ such that the root has at most one neighbor in $\GC_i$, while other nodes have no neighbors in $\GC_i$.
	\item Each connected component of the subgraph induced by $\VC_i$ must be a path on a number of nodes in the range $[\ell, 2\ell]$ such that the endpoints have at most one neighbor in $\GR_{i+1}$, while intermediate nodes have no neighbors in \(\GR_{i+1}\). If the path has one node, then this single node can have at most two neighbors in \(\GR_{i+1}\).
\end{enumerate}
A decomposition satisfying these properties is called a \emph{$(\gamma, \ell)$-decomposition}.
We order the layers as
\(\VR_1,\VC_1,\VR_2,\VC_2,\ldots\).  For later use, for each connected
component \(C\) of a layer, we define the \emph{upward-component of \(C\)},
denoted \(C^+\), as the union of all components reachable from \(C\) by a path
that visits components in strictly increasing layer order.  Equivalently,
starting from \(C\), we repeatedly add neighboring components that belong to
higher layers.

\subsection{A rake-and-compress algorithm for the decomposition}

We now describe the decomposition procedure.  
Algorithm~\ref{alg:nice-rake-compress} is described in a centralized manner.
The purpose of this centralized description is to specify the shape of the
levels.  
Afterward, we explain why the same phase structure can be implemented in the deterministic \local model.

The algorithm proceeds in phases.  
At the beginning of a phase \(i\), \(F\) is the forest induced by the nodes that have not yet received a rake or compress level.
The parameter \(\gamma\) is the length scale used in the rake step, while
\(\ell\) is the length scale used in the compress step.

The phase starts with a \emph{rake step}.  
The rake step consists of \(\gamma\) rake operations.  In each operation, all
current isolated nodes and leaves of the remaining forest are removed, except
that if a connected component is a single edge, then only the endpoint with
larger identifier is removed.  
The union of the nodes removed by these \(\gamma\) operations is the next rake
level \(\VR_i\).

After the rake step, if any nodes remain, the phase performs a
\emph{compress step}.  
The compress
step scans each maximal degree-\(2\) path, meaning a maximal path whose nodes
all have degree \(2\) in the current forest, and removes long intervals between
kept boundary nodes.  Each removed interval contains between \(\ell\) and
\(2\ell\) nodes.  
The boundary nodes remain in the forest, so only the endpoints of the removed
pieces can have neighbors in the next forest.

\begin{algorithm}[t]
\caption{\(\textsc{Tree-decomposition}\)}
\label{alg:nice-rake-compress}
\begin{algorithmic}[1]
\Require A forest \(G=(V,E)\), integers \(\gamma,\ell \ge 1\)
\Ensure A \((\gamma,\ell)\)-decomposition \((\VR_1,\VC_1,\ldots,\VR_L,\VC_L)\) of \(G\), with \(L = O(\lceil \ell/\gamma\rceil\log n)\)

\State \(F \gets G\)
\State \(i \gets 1\)

\While{\(V(F)\neq \emptyset\)}

    \Comment{Rake step}
    \State \(\VR_i \gets \emptyset\)
    \State \(F_{\mathsf{rake}} \gets F\)
    \For{\(t=1,2,\ldots,\gamma\)}
        \State \(R_t \gets \textsc{RakeStep}(F_{\mathsf{rake}})\)
        \State \(\VR_i \gets \VR_i \cup R_t\)
        \State \(F_{\mathsf{rake}} \gets F_{\mathsf{rake}}[V(F_{\mathsf{rake}})\setminus R_t]\)
    \EndFor
    \State \(F \gets F_{\mathsf{rake}}\)

    \If{\(V(F)=\emptyset\)}
        \State \(\VC_i \gets \emptyset\)
        \State \textbf{break}
    \EndIf

    \Comment{Compress step}
    \State \(\VC_i \gets \textsc{CompressPaths}(F,\ell)\)
    \State \(F \gets F[V(F)\setminus \VC_i]\)
    \State \(i \gets i+1\)

\EndWhile

\State \(L \gets i\)
\State \Return \((\VR_1,\VC_1,\ldots,\VR_L,\VC_L)\)

\end{algorithmic}
\end{algorithm}

\begin{algorithm}[t]
\caption{\(\textsc{RakeStep}(F)\)}
\label{alg:rake-step}
\begin{algorithmic}[1]
\Require A forest \(F\)
\Ensure A set \(R\subseteq V(F)\) of nodes to be removed by one rake operation

\State \(R \gets \{v\in V(F): \deg_F(v)=0\}\)
\State \(R \gets R \cup \{v\in V(F): v\text{ has unique neighbor }u,\text{ and either }\deg_F(u)\ge 2\text{ or }\mathrm{id}(v)>\mathrm{id}(u)\}\)
\State \Return \(R\)

\end{algorithmic}
\end{algorithm}

\begin{algorithm}[t]
\caption{\(\textsc{CompressPaths}(F,\ell)\)}
\label{alg:compress-paths}
\begin{algorithmic}[1]
\Require A forest \(F\), an integer \(\ell\ge 1\)
\Ensure A set \(X\subseteq V(F)\) of nodes to be removed as compress components

\State \(X \gets \emptyset\)

\ForAll{maximal paths \(P=v_1\ldots v_k\) whose nodes all have degree \(2\) in \(F\)}
    \If{\(k-2 \ge \ell\)}
        \State \(s \gets 1\)

        \While{\(k-s-1 \ge \ell\)}
            \If{\(k-s-1 \le 2\ell\)}
                \State \(q \gets k\)
            \Else
                \State \(q \gets s+2\ell+1\)
            \EndIf

            \State \(X \gets X \cup \{v_j : s<j<q\}\)
            \State \(s \gets q\)
        \EndWhile
    \EndIf
\EndFor

\State \Return \(X\)

\end{algorithmic}
\end{algorithm}

\subsection{Correctness of the decomposition algorithm}

\begin{lemma}\label{lem:tree-decomposition-partition}
    If Algorithm~\ref{alg:nice-rake-compress} terminates, it returns a
    partition of \(V(G)\).
\end{lemma}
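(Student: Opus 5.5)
The plan is to show two things: the sets \(\VR_1,\VC_1,\ldots,\VR_L,\VC_L\) returned by Algorithm~\ref{alg:nice-rake-compress} are pairwise disjoint, and their union is \(V(G)\). Both will follow from a single loop invariant on the main while-loop, together with the termination hypothesis.

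\textbf{Invariant.} At the start of each iteration \(i\), and after every update of \(F\) or \(F_{\mathsf{rake}}\), the following holds:
\[
V(F) = V(G) \setminus \bigl(\VR_1\cup\VC_1\cup\cdots\bigr),
\]
where the union ranges over the levels assigned so far, and those levels are pairwise disjoint.

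\textbf{Base case.} Initially \(F=G\) and no level has been assigned, so the invariant holds.

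\textbf{Rake step.} Each call \(\textsc{RakeStep}(F_{\mathsf{rake}})\) returns \(R_t\subseteq V(F_{\mathsf{rake}})\), as Algorithm~\ref{alg:rake-step} only selects vertices of its input forest. The update \(F_{\mathsf{rake}}\gets F_{\mathsf{rake}}[V(F_{\mathsf{rake}})\setminus R_t]\) then removes exactly \(R_t\). So the sets \(R_1,\ldots,R_\gamma\) are pairwise disjoint subsets of the current \(V(F)\). Their union \(\VR_i\) is therefore disjoint from all earlier levels, since by the invariant \(V(F)\) is. After the rake loop, \(V(F)\) equals the old \(V(F)\) minus \(\VR_i\), so the invariant is preserved.

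\textbf{Compress step.} If the algorithm breaks, \(\VC_i=\emptyset\) and nothing changes. Otherwise \(\textsc{CompressPaths}(F,\ell)\) returns a set \(X\) built only from vertices \(v_j\) of paths in \(F\), so \(\VC_i=X\subseteq V(F)\). The update \(F\gets F[V(F)\setminus \VC_i]\) removes exactly \(\VC_i\), and the invariant is again preserved.

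\textbf{Conclusion.} If the algorithm terminates, it exits the loop either through the break, where \(V(F)=\emptyset\) is tested explicitly, or because the guard \(V(F)\neq\emptyset\) fails. In both cases \(V(F)=\emptyset\), so the invariant gives that the assigned levels cover \(V(G)\) and are pairwise disjoint, i.e.\ they form a partition.

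\textbf{Main obstacle: bookkeeping at termination.} Two points need care.
\begin{itemize}
\item \emph{Indexing of \(L\).} The returned tuple must list exactly the assigned sets. On the break path \(L=i\) and \(\VC_L=\emptyset\) has been set. On the guard-failure path \(i\) has already been incremented, so \(\VR_L\) and \(\VC_L\) for the final \(L=i\) were never assigned. I will read these unassigned entries as empty sets. Empty parts do not affect the partition property, except that one then says ``partition into possibly empty parts''.
\item \emph{Well-definedness of \(\textsc{CompressPaths}\).} The sets \(\{v_j: s<j<q\}\) must have indices inside \(1..k\). This holds because \(q\le k\) in both branches: when \(k-s-1>2\ell\) we have \(q=s+2\ell+1<k\), and otherwise \(q=k\). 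So \(X\) contains only vertices of \(F\), as used in the compress step above.
\end{itemize}
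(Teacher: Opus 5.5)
Your proof is correct and follows the same argument as the paper: each level is a subset of the current forest and is removed from it immediately, so no node is assigned twice, and the loop only ends once the forest is empty, so every node is assigned; your loop invariant makes this bookkeeping explicit. Your guard-failure exit can in fact only occur when \(G\) is empty. The compress step never removes the endpoints \(v_1,v_k\) of a degree-\(2\) path, so it never empties a nonempty forest. Reading the unassigned entries as empty sets handles that case in any event.
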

\begin{proof}
    Each nonempty set \(\VR_i\) or \(\VC_i\) is a subset of the current forest
    \(F\), and after assigning it, the algorithm replaces \(F\) by the induced
    forest on the remaining nodes.  
    Hence, no node can be assigned to two different
    levels.  
    The algorithm terminates only after \(F\) is empty, because the last rake
    removal emptied the forest.  
    Therefore, every node is assigned
    to exactly one level.
\end{proof}

With the next lemma, we prove that the rake components satisfy the required properties.

\begin{lemma}\label{lem:tree-decomposition-rake-components}
    Let \(C\) be a connected component of \(G[\VR_i]\).  Then \(C\) can be
    rooted with height at most \(\gamma-1\) so that the root has at most one
    neighbor in \(\GC_i\), while all other nodes of \(C\) have no neighbors
    in \(\GC_i\).
\end{lemma}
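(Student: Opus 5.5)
We analyse the \(\gamma\) rake operations of phase \(i\) directly. Let \(F^{(0)}\) be the forest \(F\) at the start of phase \(i\), so \(V(F^{(0)}) = V(\GR_i)\). For \(t=1,\dots,\gamma\), let \(R_t\) be the set removed in the \(t\)-th operation and \(F^{(t)} = F^{(t-1)}[V(F^{(t-1)})\setminus R_t]\). Then \(\VR_i = R_1\cup\dots\cup R_\gamma\), and \(V(F^{(\gamma)}) = V(\GC_i)\), because the compress set \(\VC_i\) and all later levels are carved out of \(F^{(\gamma)}\). For each \(v\in R_t\) we record its \emph{time} \(t(v)=t\).

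The key step is a parent function. For \(v\in R_t\), \textsc{RakeStep} gives two cases. Either \(v\) is isolated in \(F^{(t-1)}\), and we declare it parentless. Or \(v\) has a unique neighbour \(u\) in \(F^{(t-1)}\), and we set \(p(v)=u\). In the second case, if \(u\) is also removed in round \(t\), then \(u\) is a leaf whose unique neighbour is \(v\). So \(\{u,v\}\) is a single-edge component, and by the identifier rule only the endpoint with the larger identifier is removed, a contradiction. Hence \(u\) survives into \(F^{(t)}\), and either \(u\in R_{t'}\) with \(t'>t\), or \(u\in V(\GC_i)\).

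Next we use the fact that \(G\) is a forest to pin down the neighbourhood of \(v\). Every neighbour \(w\neq p(v)\) of \(v\) in \(F^{(0)}\) is absent from \(F^{(t-1)}\), so it was removed at an earlier time, and at that moment \(v\) was its unique neighbour. Thus \(p(w)=v\). Consequently the edges of \(G[\VR_i]\) are exactly the pairs \(\{v,p(v)\}\) with both endpoints in \(\VR_i\). The edges from \(\VR_i\) to \(\GC_i\) are exactly the pairs \(\{v,p(v)\}\) with \(p(v)\in V(\GC_i)\), and there is at most one such edge per \(v\).

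Now fix a component \(C\) of \(G[\VR_i]\). Following \(p\) strictly increases the time, so within \(C\) there is a unique node \(r\) whose parent is undefined or lies outside \(\VR_i\). Two such nodes cannot coexist: \(C\) is a tree whose edges are all parent edges, and a tree with \(|C|-1\) edges in which every node other than the roots contributes one parent edge has exactly one root. Root \(C\) at \(r\). Then \(r\) has at most one neighbour in \(\GC_i\), namely \(p(r)\) if it exists, while every other node \(v\in C\) has \(p(v)\in C\) and therefore no neighbour in \(\GC_i\).

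The height bound follows because the times strictly increase along parent pointers and lie in \(\{1,\dots,\gamma\}\). Hence any root-to-leaf path has at most \(\gamma\) nodes, which gives height at most \(\gamma-1\).

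The main obstacle is the single-edge tie-breaking case, where we must show that a node and its parent are never removed in the same round. Everything else is bookkeeping once the parent function is established.
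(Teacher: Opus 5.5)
Your proof is correct and follows essentially the same route as the paper. Both arguments use removal times and the identifier tie-break to show that adjacent nodes are never removed in the same operation, note that each node has at most one longer-surviving neighbour, root \(C\) at the unique node with no later-removed neighbour in \(C\), and get height at most \(\gamma-1\) from strictly increasing times. The only minor difference is how root uniqueness is shown: you count parent edges in the tree \(C\), while the paper argues that two candidate roots would force an interior node on the path between them to have two longer-surviving neighbours.
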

\begin{proof}
    Let \(F_i\) be the forest at the beginning of phase \(i\).  The nodes
    of \(\GC_i\) are precisely the nodes of \(F_i-\VR_i\).

    For every node \(v\in C\), let
    \(t(v)\in\{1,\ldots,\gamma\}\) be the rake operation in which \(v\) is
    removed.  At the moment \(v\) is removed, it has degree at most \(1\) in the
    current forest.  Hence \(v\) has at most one neighbor whose removal time is
    larger than \(t(v)\), or which remains after the rake step.
    Moreover, two adjacent nodes cannot be removed in the same rake operation:
    the only case in which two adjacent nodes are both leaves is a single-edge
    component, and then the identifier tie-break removes only one endpoint.

    In a connected component \(C\) of \(G[\VR_i]\), root \(C\) at the unique
    node with no neighbor in \(C\) of larger removal time.  Uniqueness follows
    because \(G[C]\) is a tree: if two such nodes existed, the path between them
    inside \(C\) would contain a node with two neighbors that survive longer,
    contradicting that this node was a leaf when removed.  
    Along every path towards the root, removal times strictly increase, and therefore the height
    of \(C\) is at most \(\gamma-1\).

    The root is the only node of \(C\) that can have a neighbor in
    \(F_i-\VR_i\), and it has at most one such neighbor, while all other nodes of
    \(C\) have no neighbors in \(F_i-\VR_i\).
\end{proof}

The next lemma proves that the compress components satisfy the required properties.
\begin{lemma}\label{lem:tree-decomposition-compress-components}
    Let \(C\) be a connected component of \(G[\VC_i]\).  Then \(C\) is a path
    on a number of nodes in the range \([\ell,2\ell]\).  If \(C\) has at least two
    nodes, then the endpoints have at most one neighbor in \(\GR_{i+1}\),
    while intermediate nodes have no neighbors in \(\GR_{i+1}\).  If \(C\)
    consists of a single node, then this node has at most two neighbors in
    \(\GR_{i+1}\).
\end{lemma}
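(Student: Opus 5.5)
The plan is to read everything off Algorithm~\ref{alg:compress-paths}. Fix phase \(i\) and let \(F\) be the forest right after the rake step, so \(V(F) = V(\GC_i)\), \(\VC_i = \textsc{CompressPaths}(F,\ell)\), and \(V(\GR_{i+1}) = V(F)\setminus \VC_i\).

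\textbf{Step 1: components are intervals of degree-2 paths.} Every node placed in \(X\) lies on some maximal path \(P=v_1\ldots v_k\) whose nodes all have degree \(2\) in \(F\). Two distinct maximal such paths are never adjacent in \(F\), since otherwise they could be merged, contradicting maximality. Within one path \(P\), the while loop adds the intervals \(\{v_j : s<j<q\}\) for consecutive values \(s_0=1<s_1<\dots\), and each boundary index \(s_t\) is left out of \(X\). Hence the intervals coming from one iteration are separated by a kept node. I will also check that the endpoints \(v_1\) and \(v_k\) are never added: the indices satisfy \(j>s\ge 1\) and \(j<q\le k\). It follows that each connected component \(C\) of \(G[\VC_i]=F[\VC_i]\) is exactly one interval \(\{v_{s+1},\dots,v_{q-1}\}\), which is a path.

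\textbf{Step 2: size bounds.} The interval has \(q-s-1\) nodes. If \(q=k\), then the branch condition \(k-s-1\le 2\ell\) gives at most \(2\ell\) nodes, and the loop guard \(k-s-1\ge\ell\) gives at least \(\ell\). Otherwise \(q=s+2\ell+1\), so the interval has exactly \(2\ell\) nodes. In both cases the size lies in \([\ell,2\ell]\).

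\textbf{Step 3: neighbors in \(\GR_{i+1}\).} Every node of \(C\) has degree exactly \(2\) in \(F\), and all its neighbors in \(G\) that belong to \(\GR_{i+1}\) are also in \(F\). An intermediate node of \(C\) has both of its \(F\)-neighbors inside \(C\), so it has no neighbor in \(\GR_{i+1}\). If \(|C|\ge2\), each endpoint has one \(F\)-neighbor in \(C\) and one other \(F\)-neighbor, so it has at most one neighbor in \(\GR_{i+1}\). If \(|C|=1\), the single node has two \(F\)-neighbors, so it has at most two neighbors in \(\GR_{i+1}\).

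\textbf{Main obstacle.} The delicate point is Step 1. Here I must ensure that the neighbors of an interval are not themselves in \(X\) through another interval or another path. This holds because those neighbors are the boundary indices \(v_s,v_q\), and the argument above shows that boundary indices, including \(v_1\) and \(v_k\), are never added. The step also relies on the fact that nodes outside the degree-2 paths are never added to \(X\) at all.
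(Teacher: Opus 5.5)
Your proposal is correct and follows essentially the same route as the paper: you identify each component with an interval \(v_{s+1}\ldots v_{q-1}\) of a maximal degree-\(2\) path, read the size bound off the loop guard and the branch choosing \(q\), and use the kept separators \(v_s, v_q\) together with the degree-\(2\) property to bound the neighbors in the next forest. Your Step~1 is slightly more careful than the paper, which simply asserts that a component is a single interval without arguing that distinct intervals, whether in the same path or in different paths, cannot merge into one component.
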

\begin{proof}
    If \(\VC_i=\emptyset\), there is nothing to prove.  Otherwise, let \(F_i\)
    be the forest just before \(\VC_i\) is removed.  
    A component
    \(C\) produced by the compress step is the interval
    \(v_{s+1}\ldots v_{q-1}\) inside some maximal degree-\(2\) path
    \(P=v_1\ldots v_k\) of \(F_i\).  All nodes of \(P\) have degree \(2\) in
    \(F_i\).  
    The nodes \(v_s\) and \(v_q\)
    are not added to \(\VC_i\), and hence they remain in \(F_i-\VC_i\).  
    They are
    adjacent to \(C\), and they are the only possible neighbors of \(C\) in
    \(F_i-\VC_i\).  
    The nodes of \(F_i-\VC_i\) are exactly the nodes that have not yet
    been assigned after the compress step, and these induce \(\GR_{i+1}\).
    If \(C\) has at least two nodes, then its endpoints are adjacent to at
    most one of \(v_s\) and \(v_q\), while its intermediate nodes are
    adjacent to neither.  If \(C\) consists of the single node
    \(v_{s+1}=v_{q-1}\), then this node is adjacent to both \(v_s\) and
    \(v_q\), and has no other neighbor in \(\GR_{i+1}\).  Hence it has at most
    two neighbors in \(\GR_{i+1}\).

    The compress step chooses \(q\) so that
    \(\ell\le q-s-1\le 2\ell\).  Thus the removed interval \(C\) contains
    \(q-s-1\) nodes, which is in the range \([\ell,2\ell]\).
\end{proof}

Wrapping up, we conclude with the following lemma.
\begin{lemma}
    The output of Algorithm~\ref{alg:nice-rake-compress} is a
    \((\gamma,\ell)\)-decomposition of \(G\).
\end{lemma}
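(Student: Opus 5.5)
The statement follows by assembling the three preceding lemmas, so the plan is to check that each defining property of a \((\gamma,\ell)\)-decomposition is supplied by one of them. The only additional point is how the \emph{termination} hypothesis in \cref{lem:tree-decomposition-partition} gets discharged.

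\textbf{Partition.} First I will argue that Algorithm~\ref{alg:nice-rake-compress} terminates on every finite forest. In every phase the rake step removes at least one node whenever \(F\) is nonempty. Indeed, any nonempty finite forest has an isolated node or a leaf. If some component is a single edge, its endpoint with the larger identifier is removed. Otherwise every leaf has a neighbor of degree at least \(2\) and is itself removed. So \(|V(F)|\) strictly decreases in each phase, and the loop halts after at most \(n\) phases. \Cref{lem:tree-decomposition-partition} then gives that the returned sets \(\VR_1,\VC_1,\ldots\) partition \(V(G)\). The same argument shows that \(L\) is well defined: when the loop stops, the last rake step has emptied \(F\) and \(\VC_L=\emptyset\), so \(\GC_L\) is empty.

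\textbf{Identifying the forests.} I need \(\GC_i\) and \(\GR_{i+1}\) as defined in the decomposition to coincide with the forests used in the proofs. The forest \(F\) at the start of phase \(i\) is induced by the unassigned nodes. Those are exactly \(\VR_i\cup\VC_i\cup\VR_{i+1}\cup\cdots\), so \(F=\GR_i\). After the rake step, the remaining forest is \(\GC_i\). After the compress step, it is \(\GR_{i+1}\). These identifications are implicitly used in the proofs of \cref{lem:tree-decomposition-rake-components,lem:tree-decomposition-compress-components}, and I will simply state them explicitly.

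\textbf{The two properties.} Property~\ref{prop:tree-decomposition-rake} is exactly \cref{lem:tree-decomposition-rake-components} applied to each component of \(G[\VR_i]\). Property~\ref{prop:tree-decomposition-compress} is exactly \cref{lem:tree-decomposition-compress-components} applied to each component of \(G[\VC_i]\). One subtlety needs checking: distinct intervals removed by CompressPaths must not merge into a single component of \(G[\VC_i]\), whether they come from the same maximal path or from different ones. Within one path, consecutive intervals are separated by a kept boundary node \(v_q\). The first and last nodes \(v_1,v_k\) of each maximal degree-\(2\) path are never added to \(X\). Since \(F\) is a forest, the only edges leaving a path's interior go through its own nodes, and those have degree \(2\). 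Hence components of \(G[\VC_i]\) are precisely the individual intervals, as the lemma assumes.

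I do not expect a real obstacle. The step most likely to need care is termination together with the definition of \(L\), because \cref{lem:tree-decomposition-partition} is stated conditionally. The bound \(L=O(\lceil\ell/\gamma\rceil\log n)\) is not part of this statement and I leave it aside.
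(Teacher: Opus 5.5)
Your proposal is correct and takes the same route as the paper, whose proof just cites the partition lemma for the partition and the rake-component and compress-component lemmas for the two defining properties. Your extra checks are all valid and fill in details the paper leaves implicit: termination of the loop (which discharges the conditional hypothesis of the partition lemma), the identification of the current forest with \(\GR_i\) and \(\GC_i\), and the fact that distinct removed intervals cannot merge into one component of \(G[\VC_i]\).
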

\begin{proof}
    By \cref{lem:tree-decomposition-partition}, the output is a partition.
    \Cref{lem:tree-decomposition-rake-components} proves
    property~\ref{prop:tree-decomposition-rake} with parameter \(\gamma\), and
    \cref{lem:tree-decomposition-compress-components} proves
    property~\ref{prop:tree-decomposition-compress} with parameter \(\ell\).
    Thus, every condition in the definition of a
    \((\gamma,\ell)\)-decomposition is satisfied.
\end{proof}

\subsubsection*{Progress}

The next lemma shows that compress steps do not leave very long degree-\(2\) paths,
which is a key property for the progress of the algorithm.
\begin{lemma}
    \label{lem:compress-short-degree-two-paths}
    Let \(F'\) be the forest obtained after applying the compress step to a
    forest \(F\).  Then every maximal degree-\(2\) path of \(F'\) contains fewer
    than \(\ell+2\) nodes.
\end{lemma}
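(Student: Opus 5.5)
The plan is to track how degrees change when the compress step removes \(X=\textsc{CompressPaths}(F,\ell)\). From this I will conclude that every maximal degree-\(2\) path of \(F'=F[V(F)\setminus X]\) is a contiguous piece of a maximal degree-\(2\) path of \(F\) that the while-loop left unremoved. I then bound the length of such leftover pieces directly from the loop condition.

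\textbf{Step 1 (degrees barely change).} Every node of \(X\) lies on some maximal degree-\(2\) path \(P=v_1\ldots v_k\) of \(F\). By the argument in the proof of \cref{lem:tree-decomposition-compress-components}, each removed interval \(v_{s+1}\ldots v_{q-1}\) has as its only neighbors outside itself the two nodes \(v_s\) and \(v_q\), which are nodes of \(P\).
\begin{itemize}
\item Any node of \(F'\) not adjacent to \(X\) therefore has the same degree in \(F'\) as in \(F\).
\item Any node of \(F'\) adjacent to \(X\) is some \(v_s\) or \(v_q\). It had degree \(2\) in \(F\), so it has degree at most \(1\) in \(F'\).
\end{itemize}
Consequently, every node of degree \(2\) in \(F'\) already had degree \(2\) in \(F\) and is not adjacent to \(X\).

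\textbf{Step 2 (leftover pieces).} A maximal degree-\(2\) path \(P'\) of \(F'\) is a connected set of nodes of degree \(2\) in \(F\). It is therefore contained in a single maximal degree-\(2\) path \(P=v_1\ldots v_k\) of \(F\), and it avoids both \(X\) and the neighbors of \(X\).

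\textbf{Step 3 (loop analysis).} There are two cases for \(P\).
\begin{itemize}
\item If \(k-2<\ell\), then \(P\) has \(k<\ell+2\) nodes, and so does \(P'\).
\item Otherwise, the surviving nodes of \(P\) are \(v_1\), the breakpoints \(v_q\) chosen inside the loop, and the final tail \(v_s\ldots v_k\), where \(s\) is its value when the loop exits.
\end{itemize}
In the second case:
\begin{itemize}
\item \(v_2\) is removed by the first iteration, so \(v_1\) is adjacent to \(X\).
\item Each intermediate breakpoint \(v_q\) with \(q<k\) is adjacent to the removed node \(v_{q-1}\).
\end{itemize}
By Step 1, none of these nodes has degree \(2\) in \(F'\), so \(P'\) lies inside the tail \(v_s\ldots v_k\). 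The loop exits exactly when \(k-s-1<\ell\). Hence the tail has \(k-s+1<\ell+2\) nodes, and so does \(P'\). (If the last iteration chose \(q=k\), the tail is just \(v_k\), which is even shorter.)

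The only step needing care is Step 1. There I must make sure that no node of degree \(\neq 2\) in \(F\) (for example, a degree-\(3\) node at an end of \(P\)) becomes degree \(2\) in \(F'\). This is guaranteed because removals only touch neighbors that are interior path nodes of degree exactly \(2\). The rest is bookkeeping of the indices in Algorithm~\ref{alg:compress-paths}.
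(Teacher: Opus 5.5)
Your proof is correct and takes the same route as the paper's. Both split on whether $k-2<\ell$, observe that the kept nodes $v_s,v_q$ cut a long path so that only the final tail can still contain degree-$2$ nodes, and note that nodes outside $P$ keep their degree, so separate paths cannot merge. Your version is more rigorous: Step~1 (every degree-$2$ node of $F'$ was a degree-$2$ node of $F$ not adjacent to $X$) makes the paper's informal no-merging remark precise, and the bound $k-s+1<\ell+2$ from the loop's exit condition fills in a step the paper only asserts.
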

\begin{proof}
    Let \(P=v_1\ldots v_k\) be a maximal degree-\(2\) path of \(F\).  If
    \(k-2<\ell\), then \(P\) contains fewer than \(\ell+2\) nodes, and the
    compress step does not modify \(P\).  If \(k-2\ge\ell\), the compress step
    removes the internal nodes of intervals \(v_s\ldots v_q\) with
    \(q-s-1\ge\ell\).  The nodes \(v_s\) and \(v_q\) are kept, but after the
    internal nodes are removed they are no longer connected through that
    interval.  Thus, every remaining connected subpath of \(P\) contains fewer
    than \(\ell+2\) nodes that still have degree \(2\).  Since the first and
    last nodes of every compressed degree-\(2\) path are kept, a node outside
    \(P\) does not lose its incident edge to \(P\) during this compression.
    Hence, separate degree-\(2\) paths cannot merge through such a node.
\end{proof}

For a forest \(F\), let \(S(F)\) be the forest obtained by contracting every
maximal degree-\(2\) path of \(F\) into one edge.  We call \(S(F)\) the
\emph{skeleton} of \(F\).  By construction, \(S(F)\) has no degree-\(2\) nodes.
With the next lemma, we show that sequences of rake steps remove all skeleton nodes
of degree at most \(1\), which is a key property for the progress of the
algorithm.

\begin{lemma}
    \label{lem:rake-deletes-skeleton-leaves}
    Suppose that every maximal degree-\(2\) path of a forest \(F\) contains
    fewer than \(\ell+2\) nodes.  Let
    \(h=\lceil (\ell+3)/\gamma\rceil\).  After
    \(h\) consecutive rake steps, every node of \(S(F)\) of degree at most \(1\)
    has disappeared from the current skeleton.
\end{lemma}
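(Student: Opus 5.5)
We argue about the position of each skeleton leaf inside the original forest. Let \(x\) be a node of \(S(F)\) of degree at most \(1\), so \(x\) has degree \(0\) or \(1\) in \(F\). If \(\deg_F(x)=0\), then \(x\) is removed in the first rake operation and we are done. Otherwise \(x\) is a leaf of \(F\). Follow the unique path from \(x\) through the (possibly empty) maximal degree-\(2\) path \(P\) attached to it, until we reach the first node \(y\) of degree at least \(3\) in \(F\), or the other endpoint if the whole component is a path. By hypothesis \(P\) has fewer than \(\ell+2\) nodes. Hence the segment from \(x\) to \(y\), excluding \(y\), has at most \(\ell+2\) nodes, and the component hanging at \(y\) through this branch is a path of at most \(\ell+2\) nodes.

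The key claim is that a single rake operation shortens every such pendant path by at least one node, as long as the path is nonempty. Its current last node is a leaf whose unique neighbor either has degree \(2\) or is \(y\) with degree at least \(2\); in both cases the leaf is removed by \textsc{RakeStep}. The only exception is a component that is a single edge, where only the endpoint with larger identifier goes. In that case the component is just a path of at most \(\ell+2\) nodes, with no \(y\) of degree at least \(3\). Such a path loses one node from each end per operation, except in the final single-edge stage, where it loses one. So it disappears entirely within at most \(\ell+2\) operations. In either case, after at most \(\ell+2\le \ell+3\) rake operations, the original node \(x\) has been deleted.

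We then translate operations into rake steps. Each rake step consists of \(\gamma\) rake operations, so \(h=\lceil(\ell+3)/\gamma\rceil\) consecutive rake steps perform at least \(\ell+3\) operations, and every original skeleton leaf or isolated node is gone.

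The main subtlety is the phrase ``disappeared from the current skeleton''. Between rake steps, or as nodes vanish, the degrees of nodes change, so the skeleton is recomputed. We must make sure the statement only concerns the nodes of \(S(F)\) that had degree at most \(1\) in the original skeleton, which are exactly the nodes argued about above. Since deleted nodes are absent from every later forest, they are absent from every later skeleton.

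A second point to check is that no compress step is interleaved in the lemma's setting. If the lemma is applied across phases, the compress step only removes degree-\(2\) interior nodes and keeps the boundary nodes. So it can only shorten pendant paths further, and the monotonicity argument still goes through. I expect this bookkeeping, together with the single-edge tie-break, to be the main care point; the counting itself is routine.
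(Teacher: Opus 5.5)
Your route is the paper's: peel the list \(x u_1\ldots u_r\) from the \(x\) end, one node per leaf-removal operation, pay extra for the identifier tie-break, and turn \(\ell+3\) operations into \(h=\lceil(\ell+3)/\gamma\rceil\) rake steps. However, your key claim is false as stated. You assume the leaf end of the pendant path always has a neighbor of degree at least \(2\), i.e.\ that \(y\) keeps degree at least \(2\). You treat the tie-break as arising only when the component of \(F\) was a path to begin with. But the other branches at \(y\) are raked at the same time. For example, let \(y\) have degree \(3\) in \(F\). Let one other neighbor be a leaf, and the other be \(z_1\), the first node of a path \(z_1\ldots z_r\) ending in a leaf. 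After \(r\) operations the component containing \(u_r\) is exactly the edge \(\{u_r,y\}\). If \(\mathrm{id}(u_r)<\mathrm{id}(y)\), operation \(r+1\) deletes \(y\) and not \(u_r\), so it does not shorten the pendant path. Then \(u_r\) disappears only in operation \(r+2\). With \(r=\ell+1\), which the hypothesis allows, this is \(\ell+3\) operations, so your bound of \(\ell+2\) fails for the pendant path.

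The repair is the paper's count. The first surviving list node always has current degree at most \(1\), since its predecessor is gone. So each operation deletes it, unless the tie-break applies. If the tie-break applies and the surviving partner is itself a list node, a list node is still deleted. The only wasted operation is when the remaining component is the single edge joining the last list node to \(y\). After that, the last list node is isolated and is removed next. Hence the whole list is gone after at most \((r+1)+1=r+2\le\ell+3\le h\gamma\) operations, so your chosen budget still suffices. You should also state the conclusion for the whole list \(x u_1\ldots u_r\), as the paper does, not just for \(x\). The node \(x\) itself is gone after at most two operations, so the literal reading says almost nothing. What carries content, and what the progress lemma needs, is that no surviving \(u_j\) remains to become a new skeleton leaf.
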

\begin{proof}
    Let \(x\) be a node of \(S(F)\) of degree at most \(1\).  If \(x\) is
    isolated in \(S(F)\), then it is isolated in \(F\) and is removed by the
    first rake operation.  Otherwise, in \(F\), the skeleton edge incident to
    \(x\) corresponds to a path
    \[
        xu_1\ldots u_ry,
    \]
    where \(r\ge 0\), each node \(u_j\) has degree \(2\), and \(r<\ell+2\).
    If the component is just this path, choose either endpoint as \(x\).

    A rake step consists of \(\gamma\) leaf-removal operations.  As long as
    some node of \(xu_1\ldots u_r\) remains, the first remaining node on the path
    \(xu_1\ldots u_r\) is a leaf of the current forest.  Hence, one rake step
    removes the next \(\gamma\) nodes of this list, unless the whole list has
    already been removed.  The only exception is when the remaining component is
    a single edge and the first node of the list has the smaller identifier; in
    that case the other endpoint is removed first, and the first node is removed
    in the next leaf-removal operation.  After
    \(h=\lceil (\ell+3)/\gamma\rceil\) rake steps, all nodes
    \(xu_1\ldots u_r\) have been removed, because
    \(r+2\le \ell+3\le h\gamma\).  Thus, the old skeleton node \(x\), and the
    path connecting it to the rest of the skeleton, no longer appear in the
    current skeleton.
\end{proof}

If we remove all nodes of degree at most \(1\) from the skeleton of a forest,
the number of remaining nodes decreases by a constant factor, as shown by the
next lemma. 

\begin{lemma}
    \label{lem:leaf-deletion-shrinks-skeleton}
    Let \(S\) be a forest with no degree-\(2\) nodes.  If all nodes of degree at
    most \(1\) in \(S\) are deleted, the number of remaining nodes is at most
    half of \(\abs{V(S)}\).
\end{lemma}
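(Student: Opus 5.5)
Let \(S\) be a forest with no degree-\(2\) nodes. We write \(n_0,n_1\) for the numbers of nodes of degree \(0\) and \(1\), and \(n_{\ge3}\) for the number of nodes of degree at least \(3\). The nodes that survive the deletion are exactly those counted by \(n_{\ge 3}\). So it suffices to show
\[
    n_{\ge 3} \le n_0+n_1,
\]
since then \(n_{\ge3}\le \abs{V(S)}/2\).

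We argue one connected component at a time, and then sum over components.
\begin{itemize}
    \item An isolated node contributes \(1\) to \(n_0\) and nothing to \(n_{\ge 3}\).
    \item Take a tree component \(K\) with \(k\ge 2\) nodes. It has \(k-1\) edges, so the handshake lemma gives \(\sum_{v\in K}\deg(v)=2k-2\).
\end{itemize}
In such a \(K\), every node has degree \(1\) or degree at least \(3\). Writing \(a\) for the number of leaves and \(b\) for the number of nodes of degree at least \(3\), we get
\[
    2(a+b)-2=\sum_{v\in K}\deg(v)\ge a+3b,
\]
hence \(b\le a-2<a\). Summing over components gives \(n_{\ge3}\le n_1\le n_0+n_1\). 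This is the standard fact that a tree has more leaves than branching nodes.

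The only point needing care is that a component can be a single edge, or that the whole forest can be empty. In both cases \(b=0\) and the bound holds trivially, so there is no real obstacle. The inequality is in fact strict, giving fewer than half, but we only need at most half.
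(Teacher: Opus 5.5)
Your proof is correct and takes the same route as the paper. Both split the nodes into those of degree at most $1$ and those of degree at least $3$, and use that a forest has at least as many of the former as of the latter. The paper simply asserts this fact, while you justify it with the per-component handshake count $2(a+b)-2 \ge a+3b$, which is a welcome addition; your strictness remark needs $S$ to be nonempty, but that does not affect the claim.
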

\begin{proof}
    Let \(L\) be the set of nodes of degree at most \(1\) in \(S\), and let
    \(B\) be the set of nodes of degree at least \(3\).  Since \(S\) has no
    degree-\(2\) nodes, \(V(S)=L\cup B\).  In every forest,
    \(\abs{B}\le \abs{L}\).  Hence, after deleting \(L\), at most
    \(\abs{B}\le \abs{V(S)}/2\) nodes remain.
\end{proof}

The following result concludes the analysis of the progress of the algorithm.

\begin{lemma}\label{lem:progress-lemma}
    Algorithm~\ref{alg:nice-rake-compress} creates
    \(O(\lceil \ell/\gamma\rceil\log n)\) levels.
\end{lemma}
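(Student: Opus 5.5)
The plan is to track the size of the skeleton \(S(F)\) over blocks of consecutive phases and show that it halves every \(h+1\) phases, where \(h=\lceil(\ell+3)/\gamma\rceil=O(\lceil\ell/\gamma\rceil)\). Since the skeleton of the initial forest has at most \(n\) nodes, after \(O(h\log n)\) phases the skeleton, and hence the forest, must be empty. This gives \(L=O(\lceil\ell/\gamma\rceil\log n)\) levels, because each phase contributes exactly one rake and one compress level.

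The first step is a preparatory phase. Run one full phase, whose compress step runs at its end. By \cref{lem:compress-short-degree-two-paths}, the resulting forest has every maximal degree-\(2\) path with fewer than \(\ell+2\) nodes. Next I need this property to persist through later phases. Rake operations only remove leaves; they can create new degree-\(2\) nodes, for instance when a degree-\(3\) node loses a leaf, and thereby merge degree-\(2\) paths into long ones. This is harmless, because every phase ends with a compress step, so at the start of each rake step the forest again satisfies the hypothesis of \cref{lem:compress-short-degree-two-paths}. The hypothesis of \cref{lem:rake-deletes-skeleton-leaves}, however, must hold across \(h\) consecutive rake steps interleaved with compress steps. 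I would handle this by arguing about the skeleton directly. A compress step only deletes interior intervals of degree-\(2\) paths, so it never adds skeleton nodes. Its effect is to split a skeleton edge into two pendant parts ending at kept boundary nodes, and these boundary nodes become leaves of the new forest. Those new leaves are then removed by subsequent rakes. Hence the compress step does not hinder the leaf-stripping argument of \cref{lem:rake-deletes-skeleton-leaves}, which only uses that each rake step removes \(\gamma\) nodes from the front of each short pendant path.

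The main step is as follows. Fix a phase start with forest \(F\) whose degree-\(2\) paths are short. After \(h\) further phases, \cref{lem:rake-deletes-skeleton-leaves} applies, with compress steps interleaved only helping, and shows that every node of degree at most \(1\) in \(S(F)\) has disappeared. Every node of the current skeleton is then a node of \(F\) that had degree at least \(3\) in \(S(F)\). This holds because removals never raise degrees, so new skeleton branch nodes cannot appear, while leaf nodes of the new skeleton are either old branch nodes or nodes that are about to be removed. By \cref{lem:leaf-deletion-shrinks-skeleton}, at most \(|V(S(F))|/2\) of these old branch nodes survive. Counting carefully, the new skeleton has at most about \(|V(S(F))|/2\) nodes; compress-created leaves could add to this count, so I would bound the new skeleton by the old branch nodes plus removed-pendant bookkeeping, or simply run one extra phase so that those leaves are raked as well. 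Iterating gives an empty skeleton after \(O(h\log n)\) phases, and an empty skeleton means an empty forest.

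The main obstacle is making rigorous the claim that the interleaved compress steps neither create new skeleton nodes that survive nor break the hypothesis of \cref{lem:rake-deletes-skeleton-leaves}. The question is whether boundary nodes \(v_s,v_q\) created by compression, together with degree-\(2\) nodes created by raking, can inflate the skeleton count. I would first try to show that the set of nodes of degree at least \(3\) in the current forest is monotonically shrinking, and that each current skeleton leaf lies on a pendant path of fewer than \(\ell+3\) nodes. With these two facts, the charging argument of \cref{lem:leaf-deletion-shrinks-skeleton} applies to the degree-\(\ge3\) set alone.
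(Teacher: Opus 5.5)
Your plan follows the paper's route. After the first compress step, you group phases into blocks of \(h=\lceil(\ell+3)/\gamma\rceil\) and use \cref{lem:rake-deletes-skeleton-leaves} to show that every skeleton node of degree at most \(1\) present at the start of a block is gone by its end. You then halve via \cref{lem:leaf-deletion-shrinks-skeleton}. The paper handles the interleaved compress steps by monotonicity: compression only passes to a subforest, so degrees only drop and nodes become leaves no later.

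The gap is in your main step. Your claim that compression ``never adds skeleton nodes'' is false, and your own next sentence contradicts it. So is your claim that every node of the later skeleton is an old branch node. The kept separators \(v_s,v_q\) become new leaves or isolated nodes. Likewise, a pendant path stripped past an old branch node ends in an old degree-\(2\) node that is now a leaf. Neither of your patches closes this. An extra phase ends with another compress step that creates fresh leaves, and the ``bookkeeping'' is never specified.

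The effect is real, not a technicality. Let \(\gamma=\ell+3\), so \(h=1\). Take an arm \(x\,P\,b\,Q\,y\) with \(\lvert P\rvert=\lvert Q\rvert=\ell+1\) and a pendant leaf \(a\) at \(b\), where \(x\) and \(y\) keep degree \(3\) through the block. Once \(a\) is raked, \(PbQ\) is a degree-\(2\) path on \(2\ell+3\) nodes, and Algorithm~\ref{alg:compress-paths} leaves two brand-new leaves in place of \(a,b\). Now put such arms on all edges of a tree with \(I\ge 3\) internal nodes, each of degree \(3\). This forest arises after phase~\(1\) if a path of \(\gamma\) nodes hangs below every leaf. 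It has \(s_j=6I+4\) but \(s_{j+1}=4I>s_j/2\). So the paper's one-line inequality \(s_{j+1}\le s_j/2\), derived from the same two lemmas, has exactly the blind spot you sensed.

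The missing idea is to track only the number of nodes of degree at least \(3\), as your last paragraph hints. However, you need halving, not mere monotonicity, and it comes from the following argument.

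\begin{itemize}
\item Let \(F\) be the forest at the start of a block and \(S=S(F)\). Let \(L\) and \(B\) be the nodes of \(S\) of degree \(\le 1\) and \(\ge 3\), respectively, and let \(F'\) be the forest at the end of the block.
\item If \(\deg_{F'}(v)\ge 3\), then \(v\in B\), since degrees never increase.
\item Each edge of \(F\) at \(v\) leads along a degree-\(2\) path to a distinct skeleton neighbor. When that neighbor lies in \(L\), the whole pendant path, including \(v\)'s neighbor on it, is deleted during the block. Hence \(\deg_{F'}(v)\le\deg_{S-L}(v)\).
\item In the forest \(S-L\), whose node set is \(B\), nodes of degree \(\ge 3\) are no more numerous than nodes of degree \(\le 1\). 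So at most \(\lvert B\rvert/2\) of them exist.
\end{itemize}

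Thus the count of degree-\(\ge 3\) nodes halves every block, whatever leaves compression creates. After \(O(\log n)\) blocks it is \(0\). The forest is then a disjoint union of paths with fewer than \(\ell+2\) interior nodes, and one more block deletes it. This gives \(O(h\log n)=O(\lceil\ell/\gamma\rceil\log n)\) phases.
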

\begin{proof}
    Let \(h=\lceil (\ell+3)/\gamma\rceil\).  After every compress step, the current
    forest has no maximal degree-\(2\) path with at least \(\ell+2\) nodes by
    \cref{lem:compress-short-degree-two-paths}.  If the algorithm does not
    terminate during the first phase, its first compress step establishes this
    property.  Starting from such a forest,
    \cref{lem:rake-deletes-skeleton-leaves} says that \(h\) rake steps would
    delete all nodes of degree at most \(1\) in the skeleton present at the
    beginning of the block.  The compress steps performed inside the block only
    pass to subforests of the forest that would be obtained without compression:
    they delete nodes and incident edges, and never add nodes or edges.  Thus
    degrees can only decrease, and any node that would become a leaf after some
    number of leaf-removal operations without these compress steps becomes a
    leaf no later if the compress steps are also performed, unless it has
    already been deleted by a compress step.

    Thus, after the first phase, every block of \(h\) phases removes at least
    all nodes of degree at most \(1\) in the skeleton present at the beginning of
    the block.  By \cref{lem:leaf-deletion-shrinks-skeleton}, deleting these
    nodes leaves at most half of the skeleton nodes.  Hence, if \(s_j\) is the
    skeleton size at the beginning of the \(j\)-th such block, then
    \(s_{j+1}\le s_j/2\).  Since the initial skeleton has at most
    \(n\) nodes, there are at most
    \(1+h\lceil \log_2 n\rceil\) phases.  Each phase creates at most one rake
    level and one compress level, so the number of levels is at most
    \(2+2h\lceil \log_2 n\rceil =
    O(\lceil \ell/\gamma\rceil\log n)\).
\end{proof}

We now show that \cref{alg:nice-rake-compress} can be implemented in the deterministic \local model with the same phase structure, and with a cost of \(O(\gamma+\ell^2+\ell\log^* n)\) rounds per phase.
We first introduce the notion of ruling sets, which are a generalization of maximal independent sets.

\begin{definition}
    Let \(a,b\ge 1\) be integers.  A set \(S\subseteq V(G)\) is an
    \((a,b)\)-ruling set of \(G\) if every two distinct nodes of \(S\) are
    at distance at least \(a\), and every node of \(G\) is at distance at most
    \(b\) from some node of \(S\).
\end{definition}

Computing specific ruling sets can be done by computing an MIS in the power graph of the input graph.
Deterministic algorithms for MIS have been extensively studied in the literature \cite{fraigniaud-heinrich-kosowski-2016-local-conflict,barenboim-elkin-goldenberg-2022-locally-iterative,maus-tonoyan-2020-local-conflict-coloring-revisited}. 
We do not wish to optimize the running time of the ruling set, and the following lemma is enough for our purposes, as we will argue in the following sections.

\begin{lemma}
    \label{lem:path-power-mis}
    Let \(P_n\) be an \(n\)-node path, and let \(\alpha\ge 2\).  There is a
    deterministic \local algorithm that computes an
    \((\alpha,\alpha-1)\)-ruling set of \(P_n\) in
    \(O(\alpha^2+\alpha\log^* n)\) rounds.
\end{lemma}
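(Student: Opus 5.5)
The plan is to reduce the problem to computing an MIS of the power graph \(P_n^{\alpha-1}\), in which two nodes are adjacent iff their distance in \(P_n\) is at most \(\alpha-1\). An MIS \(S\) of this graph is exactly an \((\alpha,\alpha-1)\)-ruling set of \(P_n\). Independence in \(P_n^{\alpha-1}\) says that two distinct nodes of \(S\) are at distance at least \(\alpha\) in \(P_n\). Maximality says that every node outside \(S\) has a neighbor of \(S\) in the power graph, hence lies within distance \(\alpha-1\) of \(S\). Any algorithm running in \(r\) rounds on \(P_n^{\alpha-1}\) can be simulated in \((\alpha-1)r\) rounds on \(P_n\), since one power-graph round is \(\alpha-1\) path rounds. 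The power graph has maximum degree \(\Delta'\le 2(\alpha-1)\).

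First, I would compute a proper coloring of \(P_n^{\alpha-1}\) with \(O(\Delta'^2)=O(\alpha^2)\) colors using Linial's color reduction, starting from the \(O(\log n)\)-bit identifiers. This takes \(O(\log^* n)\) rounds on the power graph, hence \(O(\alpha\log^* n)\) rounds on \(P_n\).

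Second, I would reduce this to an \(O(\alpha)\)-coloring, or go directly to an MIS, without paying more than \(O(\alpha^2)\) path rounds.

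\begin{itemize}
\item \emph{Naive route.} Process the color classes one by one: a node of the current color joins \(S\) unless a power-graph neighbor has already joined. This costs one power-graph round per color, i.e.\ \(O(\alpha^2)\cdot(\alpha-1)=O(\alpha^3)\) path rounds, which is too much.
\item \emph{Preferred route.} Exploit the line structure. With an \(O(\alpha^2)\)-coloring of the power graph in hand, run a locally-iterative \((\Delta'+1)\)-coloring algorithm, such as the one cited from Barenboim--Elkin--Goldenberg. Its \(O(\Delta')\) rounds on the power graph would cost \(O(\alpha^2)\) path rounds, giving an \(O(\alpha)\)-coloring. Then sweep the \(O(\alpha)\) color classes greedily at \((\alpha-1)\) path rounds each, for another \(O(\alpha^2)\) path rounds.
\item \emph{Alternative avoiding that citation.} Use the path directly. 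Compute a \(3\)-coloring of \(P_n\) and, from it, an MIS \(M\) of \(P_n\) in \(O(\log^* n)\) rounds. Work with the segments between consecutive \(M\)-nodes, which have length at most \(3\). Group segments into blocks of length between \(\alpha\) and \(2\alpha\) via an MIS of the \(\alpha\)-power computed with Linial in \(O(\alpha\log^* n)\) rounds. Inside each block, place ruling-set nodes greedily in \(O(\alpha)\) rounds, and fix up the block boundaries locally. This is essentially the same distance-\(\alpha\) problem, so I would use it only if the coloring route fails.
\end{itemize}

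The main obstacle is this second stage: making sure the post-Linial reduction costs \(O(\alpha)\) power-graph rounds rather than \(O(\alpha^2)\). I would rely on the cited locally-iterative or list-coloring results, in the spirit of Maus--Tonoyan, which achieve \((\Delta+1)\)-coloring in \(O(\Delta)+O(\log^* n)\) rounds. The total is then \(O(\alpha\log^* n)+O(\alpha\cdot\alpha)=O(\alpha^2+\alpha\log^* n)\) rounds.
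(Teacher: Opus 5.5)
Your proposal is correct and follows the paper's argument. Both reduce to an MIS of the power graph \(P_n^{\alpha-1}\), whose maximum degree is \(O(\alpha)\), and obtain it via the deterministic \(O(\Delta+\log^* n)\)-round algorithm of Barenboim--Elkin--Goldenberg, simulated with \(O(\alpha)\) overhead per round; you simply unpack that black box into Linial's reduction, the locally-iterative \((\Delta'+1)\)-coloring, and a sweep over the color classes, so your fallback route is unnecessary.
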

\begin{proof}
    Let \(r=\alpha-1\), and let \(H=P_n^r\) be the graph on node set
    \(V(P_n)\) in which two distinct nodes are adjacent if their distance in
    \(P_n\) is at most \(r\).  The maximum degree of \(H\) is at most
    \(2r=O(\alpha)\).  We compute a deterministic MIS of \(H\) using a
    degree-dependent \(O(\Delta+\log^* n)\)-round deterministic MIS algorithm
    \cite{barenboim-elkin-goldenberg-2022-locally-iterative}.
    A communication round in \(H\) can be simulated in \(O(\alpha)\) rounds on
    the original path \(P_n\), because the neighbors of a node in \(H\) are
    within distance at most \(r=\alpha-1\) in \(P_n\).  Thus, the total cost on
    \(P_n\) is
    \(
        O(\alpha\cdot(\Delta(H)+\log^* n))
        =
        O(\alpha^2+\alpha\log^* n)
    \), where \(\Delta(H)\) is the maximum degree of \(H\).

    Let \(S\) be the MIS of \(H\).  Since \(S\) is independent in \(H\), no two
    nodes of \(S\) are at distance at most \(r\) in \(P_n\).  Hence, distinct
    nodes of \(S\) are at distance at least \(r+1=\alpha\) in \(P_n\).  Since
    \(S\) is maximal in \(H\), every node of \(P_n\) is either in \(S\) or has
    a neighbor in \(S\) in \(H\), and is therefore within distance
    \(r=\alpha-1\) from \(S\) in \(P_n\).  Thus, \(S\) is an
    \((\alpha,\alpha-1)\)-ruling set of \(P_n\).
\end{proof}

The following lemma establishes that one phase of our \cref{alg:nice-rake-compress} can be implemented in the deterministic \local model with a cost of \(O(\gamma+\ell^2+\ell\log^* n)\) rounds.
\begin{lemma}
    \label{lem:phase-local-implementation}
    One phase of the rake-and-compress decomposition can be implemented in
    \(O(\gamma+\ell^2+\ell\log^* n)\) rounds in the deterministic \local model.
\end{lemma}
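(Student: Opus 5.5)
I will split the phase into its rake step and its compress step and bound each separately. For the rake step, the plan is to simulate the \(\gamma\) calls to \(\textsc{RakeStep}\) one after another, spending \(O(1)\) rounds on each. In one operation a node \(v\) only needs to know its current degree in \(F_{\mathsf{rake}}\). If it has a unique neighbor \(u\), it also needs \(\deg(u)\) and the identifiers of both nodes. All of this can be gathered in one or two rounds. Every node maintains a flag recording whether it is still present. After each operation, the removed nodes notify their neighbors so that degrees are updated. Hence the rake step costs \(O(\gamma)\) rounds, and afterwards every node knows whether it lies in \(\VR_i\) or in the residual forest \(F\).

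For the compress step, Algorithm~\ref{alg:compress-paths} uses a sequential left-to-right scan along each maximal degree-\(2\) path, and that scan is not local. I therefore plan to implement a \emph{different} choice of cut points that still satisfies the two properties the analysis depends on. The first property is that every removed interval has between \(\ell\) and \(2\ell\) nodes, with both neighbors kept (this is what \cref{lem:tree-decomposition-compress-components} requires). The second is that no surviving degree-\(2\) run contains \(\ell+2\) or more nodes (this is what \cref{lem:compress-short-degree-two-paths} and hence \cref{lem:progress-lemma} require). I will make the claim about the phase "with the same phase structure" precise by checking that those lemmas used only these two properties.

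The concrete plan for the compress step is as follows.
\begin{enumerate}
\item Each node learns whether it has degree \(2\) in \(F\), which takes one round.
\item Every degree-\(2\) node within distance \(O(\ell)\) of an endpoint of its maximal path learns its position exactly. It does this by exploring radius \(O(\ell)\) along the path.
\item Paths with fewer than roughly \(4\ell\) nodes are handled directly. Such a path is seen entirely by its nodes within \(O(\ell)\) rounds, so they simulate the scan of Algorithm~\ref{alg:compress-paths} exactly.
\item On longer paths, I apply \cref{lem:path-power-mis} with \(\alpha=\ell+1\) to the induced path. This takes \(O(\ell^2+\ell\log^* n)\) rounds and yields an \((\ell+1,\ell)\)-ruling set \(S\). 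Consecutive nodes of \(S\) are at distance between \(\ell+1\) and \(2\ell+1\), so the gap between them contains between \(\ell\) and \(2\ell\) nodes.
\item I then declare the nodes of \(S\) to be the kept boundary nodes \(v_s\), and remove every gap between consecutive ruling nodes entirely.
\item Near the two ends of the path, I keep the path endpoints \(v_1,v_k\). The stretch between an end and the nearest ruling node is handled locally with \(O(\ell)\) knowledge. If that stretch has at least \(\ell\) interior nodes, it is cut as in the scan (one or two intervals of the right length). Otherwise it is left in place, so its surviving run has fewer than \(\ell+2\) nodes.
\end{enumerate}
Every removed interval then has size in \([\ell,2\ell]\). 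Every surviving degree-\(2\) run lies either between two kept nodes that are adjacent after removal or within a short end stretch, so it has fewer than \(\ell+2\) nodes.

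The main obstacle is this boundary and short-path bookkeeping. I must make sure that the end stretches can always be split into legal intervals or left short, and that kept singleton ruling nodes do not create a long surviving run. Ruling nodes are isolated survivors, so their runs are trivially short. Summing the costs gives \(O(\gamma)+O(\ell)+O(\ell^2+\ell\log^* n)=O(\gamma+\ell^2+\ell\log^* n)\) rounds per phase, which is the claimed bound.
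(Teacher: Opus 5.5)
Your proposal is correct and is essentially the paper's argument: simulate the \(\gamma\) leaf-removal operations in \(O(\gamma)\) rounds, compute an \((\ell+1,\ell)\)-ruling set with \cref{lem:path-power-mis} on each maximal degree-\(2\) path, remove the gaps between consecutive separators, and observe that the correctness and progress analysis only uses the interval-size property and the short-surviving-run property. The only difference is at the path ends. The paper forbids nodes within distance \(\ell\) of an endpoint from becoming separators, so every gap, including the two end gaps, is removed and no short-path case is needed. You instead keep the end stretches: by the covering property they always have fewer than \(\ell\) interior nodes, so the cutting branch in your step 6 never fires. In your version, it is the threshold \(k\ge 4\ell\ge 2\ell+2\) for long paths that guarantees at least two ruling nodes, and hence at least one removed gap. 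Without that, a single ruling node could leave the whole path, with up to \(2\ell+1\) degree-\(2\) nodes, in place.
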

\begin{proof}
    The rake level is obtained by simulating \(\gamma\) rounds of leaf removal
    in the current forest, using identifiers to break symmetry on isolated
    edges.  This takes \(O(\gamma)\) rounds in total.

    For the compress level, consider the subgraph induced by the degree-\(2\)
    nodes of the current forest.  Its connected components are precisely the
    maximal degree-\(2\) paths.  The LOCAL implementation of the compress step
    uses separators on these paths rather than the centralized scan of
    Algorithm~\ref{alg:compress-paths}.  Let \(P=v_1\ldots v_k\) be one such
    path.  If \(k-2<\ell\), no node of \(P\) is removed.  Otherwise, \(v_1\) and
    \(v_k\) are kept as separators.  The nodes within distance at most \(\ell\)
    from either endpoint are not allowed to become additional separators.  On
    the remaining middle subpath, if it is nonempty, we compute a deterministic
    \((\ell+1,\ell)\)-ruling set using \cref{lem:path-power-mis}, with
    \(\alpha=\ell+1\); this takes \(O(\ell^2+\ell\log^* n)\) rounds.

    Let \(S_P\) be the union of the two endpoints and the ruling-set nodes,
    ordered along \(P\).  Consecutive nodes of \(S_P\) are at distance at
    least \(\ell+1\): this follows from definition of ruling set, and from the
    fact that nodes within distance less than \(\ell+1\) from an
    endpoint were not allowed to become separators.  Consecutive nodes of
    \(S_P\) are also at distance at most \(2\ell+1\): between two ruling-set
    nodes this follows from the covering condition, and near an endpoint the
    first eligible node is at
    distance \(\ell+1\) from the endpoint and is within distance at most
    \(\ell\) from a ruling-set node.  
    If the middle subpath is empty, then
    \(k-2\le 2\ell\), so the two endpoints are at distance at most \(2\ell+1\).

    The distributed compress step removes exactly the internal nodes
    between each pair of consecutive separators in \(S_P\).  Thus, it may choose
    different separators from the centralized scan in
    Algorithm~\ref{alg:compress-paths}, but every removed component is still a
    path on between \(\ell\) and \(2\ell\) nodes, and its only possible
    neighbors in the next forest are the two separators.  These are precisely the
    properties used in the correctness and progress proofs.  Hence, one
    complete phase takes \(O(\gamma+\ell^2+\ell\log^* n)\) rounds.
\end{proof}

Now we are ready to conclude with the main result of this section, which states that a \((\gamma,\ell)\)-decomposition can be computed in the deterministic \local model with locality \(O((\gamma+\ell^2+\ell\log^* n)\lceil \ell/\gamma\rceil\log n)\).
\begin{theorem}
    \label{lem:tree-decomposition-local}
    Let \(\gamma,\ell\ge 1\) be integers.
    There exists a deterministic \local algorithm \(\calA\) that computes a
    \((\gamma,\ell)\)-decomposition of a forest of \(n\) nodes with
    \(L = O(\lceil \ell/\gamma\rceil\log n)\) levels in time
    \(O((\gamma+\ell^2+\ell\log^* n)\lceil \ell/\gamma\rceil\log n)\).
\end{theorem}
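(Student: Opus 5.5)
The plan is to assemble the theorem from the pieces already established in this appendix. Let \(\calA\) be the deterministic \local implementation of Algorithm~\ref{alg:nice-rake-compress}, run phase by phase as in \cref{lem:phase-local-implementation}. We need three facts. First, the output is a \((\gamma,\ell)\)-decomposition. Second, the number of levels is \(L = O(\lceil \ell/\gamma\rceil\log n)\). Third, the running time is the number of phases times the cost per phase.

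For correctness, \cref{lem:phase-local-implementation} shows that each distributed phase produces a rake level with the same shape as the centralized rake step. The compress level may use different separators from the centralized scan, but every removed component is still a path on between \(\ell\) and \(2\ell\) nodes, and its only neighbors in the next forest are the two bounding separators. I would check that the proofs of \cref{lem:tree-decomposition-partition,lem:tree-decomposition-rake-components,lem:tree-decomposition-compress-components} use only these properties and never the specific choice of \(q\) in Algorithm~\ref{alg:compress-paths}. Then they carry over verbatim, and the output is a \((\gamma,\ell)\)-decomposition.

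For the level count, I would re-run \cref{lem:progress-lemma} for the distributed variant. The main point is \cref{lem:compress-short-degree-two-paths}: after the distributed compress step, every maximal degree-\(2\) path still has fewer than \(\ell+2\) nodes. Consecutive separators are at distance at most \(2\ell+1\), so every surviving degree-\(2\) stretch lies between a separator and the removed interval next to it. Paths with \(k-2<\ell\) are untouched and already short. Separators stay in the forest, so distinct paths cannot merge. With this in hand, \cref{lem:rake-deletes-skeleton-leaves,lem:leaf-deletion-shrinks-skeleton} apply unchanged. Hence every block of \(h=\lceil(\ell+3)/\gamma\rceil\) phases halves the skeleton, which gives \(O(h\log n)=O(\lceil \ell/\gamma\rceil\log n)\) phases and levels.

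For the running time, the nodes first run the phases one after another. Phase \(i\) begins once the removal status of every node from phases \(<i\) is known, and it costs \(O(\gamma+\ell^2+\ell\log^* n)\) rounds by \cref{lem:phase-local-implementation}. Multiplying by the number of phases gives the stated bound.

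I expect the main obstacle to be termination. A node cannot tell locally that the global forest is empty, so it cannot simply wait for the loop to stop. The fix is a fixed schedule: since \(n\) is known, every node runs exactly \(1+h\lceil\log_2 n\rceil\) phases and then halts. By the progress argument, every node has been assigned a level by that point, and phases that find an empty forest assign nothing.
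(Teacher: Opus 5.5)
Your assembly is the same as the paper's proof of this theorem. You run the phases of \cref{lem:phase-local-implementation}, get correctness from \cref{lem:tree-decomposition-partition,lem:tree-decomposition-rake-components,lem:tree-decomposition-compress-components}, and multiply the per-phase cost by the phase count of \cref{lem:progress-lemma}. Those component lemmas indeed only use that each removed piece has between \(\ell\) and \(2\ell\) nodes and is bounded by two kept separators. The fixed schedule is a harmless addition that the paper leaves implicit.

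\textbf{The gap.} The step that fails is the one you take over from \cref{lem:progress-lemma}: that every block of \(h\) phases halves the skeleton. \Cref{lem:rake-deletes-skeleton-leaves,lem:leaf-deletion-shrinks-skeleton} only show that at most half of the \emph{old} skeleton nodes survive a block. The skeleton at the start of the next block also contains nodes that had degree \(2\) at the start of the block and then lost a neighbor.

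The main source of such nodes is the distributed compress step. On a compressed path no node keeps degree \(2\): internal separators become isolated, and the two end separators drop to degree \(1\). That, rather than your remark about surviving degree-\(2\) stretches, is why \cref{lem:compress-short-degree-two-paths} holds for the distributed variant.

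These new nodes can be numerous. Take \(\gamma\ge\ell+3\), so \(h=1\), and let a block start from a comb. It has degree-\(3\) spine nodes \(u_1,\dots,u_m\), consecutive ones joined by \(\ell+1\) degree-\(2\) nodes, and each \(u_i\) carries a pendant leaf (two at the ends). This state is reachable after phase \(1\) by hanging a path of \(\gamma\) nodes below every pendant leaf in the input. Its skeleton has \(2m+2\) nodes.

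The first leaf removal deletes the pendants. The spine then becomes one path of about \((\ell+2)m\) nodes, which loses only \(O(\gamma)\) nodes at its ends. The ruling set may place separators exactly \(\ell+1\) apart. The next skeleton then has about \((\ell+2)m/(\ell+1)\) isolated nodes, which exceeds \(m+1\) for large \(m\). With pendants of \(2T+1\) nodes, the same happens in the regime \(\gamma=T\), \(\ell=2T\), \(h=3\).

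\textbf{A repair.} The bound itself survives if you track the number \(b_j\) of nodes of degree at least \(3\) at the start of block \(j\).
Degrees never increase, so no new such nodes appear. Let \(L\) and \(B\) be the nodes of degree at most \(1\) and at least \(3\) in the skeleton \(S\) at the start of the block. As in \cref{lem:rake-deletes-skeleton-leaves}, each \(x\in L\) disappears together with its hanging degree-\(2\) path. Hence every surviving \(b\in B\) ends with degree at most \(\deg_{S-L}(b)\).

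In the forest \(S-L\), the nodes of degree at least \(3\) and those of degree at most \(1\) form disjoint sets \(B',L'\subseteq B\) with \(\abs{B'}<\abs{L'}\). Therefore \(b_{j+1}\le\abs{B'}<b_j/2\). When \(b_j=0\), the forest is a union of paths on at most \(\ell+3\) nodes, and the \(h\gamma\ge\ell+3\) leaf removals of the block delete them. Since \(b_1<n/2\), at most \(1+h\lceil\log_2 n\rceil\) phases are needed. So your fixed schedule and the stated bounds go through with this repaired progress argument.
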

\begin{proof}
    Run the phase-by-phase \local implementation from
    \cref{lem:phase-local-implementation}.
    By \cref{lem:progress-lemma}, only
    \(O(\lceil \ell/\gamma\rceil\log n)\) phases are needed, and each phase
    creates at most one rake level and one compress level.  By the correctness
    lemma, the produced levels form a \((\gamma,\ell)\)-decomposition.  Since
    each phase costs \(O(\gamma+\ell^2+\ell\log^* n)\) rounds, the total
    running time is
    \(O((\gamma+\ell^2+\ell\log^* n)\lceil \ell/\gamma\rceil\log n)\).
\end{proof}

A key property of any \((\gamma,\ell)\)-decomposition is the following lemma, restated here for reader's convenience:
\lemmadist*

\begin{proof}
    The lemma holds by induction:
    Let \(V_i\) be the layer containing \(C\).
    Then clearly \(V_i \cap C^+\) contains only on component, \(C\).
    If \(V_i\) is a rake layer, then it has at most one neighboring component \(C'\) at a higher layer, in which case we can use induction on \(C'\).
    On the other hand, if \(V_i\) is a compress layer, then it has at most two neighboring components \(C'\) and \(C''\) in the higher layer.
    However, as \(V_i\) is a path with at least \(\ell\) nodes, then components \(C'\) and \(C''\) have distance at least \(\ell+1\).
    Hence, for components of \(C'\) and \(C''\), we can inductively apply the lemma for them separately.
    Finally, there cannot be a subcomponent in \(C'\) and in \(C''\) at any level such that they would be closer than \(\ell+1\) from each other as path \(V_i\) separates them.
\end{proof}

\end{document}